\documentclass[11pt]{article}
\usepackage[svgnames]{xcolor}

\usepackage{tikz}
\usepackage{tkz-graph}
\usetikzlibrary{positioning}
\usetikzlibrary{shapes}

\usepackage{graphicx}
\usepackage[utf8]{inputenc}
\usepackage[T1]{fontenc}
\usepackage{stmaryrd}
\usepackage{textcomp,amssymb,amsmath,amsthm}
\usepackage{bbm}
\usepackage[english]{babel}
\usepackage[normalem]{ulem}
\usepackage{float} 
\usepackage{tabularx} 

\makeatletter
\newtheorem*{rep@theorem}{\rep@title}
\newcommand{\newreptheorem}[2]{%
\newenvironment{rep#1}[1]{%
 \def\rep@title{#2 \ref{##1}}%
 \begin{rep@theorem}}%
 {\end{rep@theorem}}}
\makeatother
\newreptheorem{theorem}{Theorem}

\makeatletter
\newtheorem*{rep@corollary}{\rep@title}
\newcommand{\newrepcorollary}[2]{%
\newenvironment{rep#1}[1]{%
 \def\rep@title{#2 \ref{##1}}%
 \begin{rep@corollary}}%
 {\end{rep@corollary}}}
\makeatother
\newrepcorollary{corollary}{Corollary}

\setlength\parindent{20pt}

\usepackage[a4paper,top=3cm,bottom=2cm,left=3cm,right=3cm,marginparwidth=1.75cm]{geometry}

\usepackage{algorithm2e}
\usepackage{amsmath}
\usepackage{amssymb}
\usepackage{graphicx}
\usepackage[colorinlistoftodos]{todonotes}
\usepackage[colorlinks=true, allcolors=blue]{hyperref}

\newtheorem{theorem}{Theorem}

\newtheorem{assumption}{Assumption}
\newtheorem{claim}{Claim}

\newtheorem{lemma}{Lemma}
\newtheorem{corollary}{Corollary}

\newcommand{\util}{\pi}
\newcommand{\Util}{\Pi}
\newcommand{\ppath}{P_{n}}
\newcommand{\throttle}[1]{}

\title{The Empirical Core of the Multicommodity Flow Game Without Side Payments }
\author{Coulter Beeson, Bruce Shepherd}

\begin{document}
\maketitle

\begin{abstract}

Policy makers focus on stable strategies as  the ones adopted by rational players. If there are many stable solutions, however,  an important secondary question is how to select amongst them.  We study this question for the
{\em  multicommodity flow coalition game}, introduced by Papadimitriou  to  model incentives and cooperation between autonomous systems in the Internet. In short, the strategies of the game are flows in a capacitated network (the supply graph). The payoff to any node is the total flow which it terminates.  
Markakis-Saberi  show that this game is balanced and hence has a non-empty core by Scarf's Theorem. In the transferable utility (TU) version this also leads
to a polynomial-time algorithm to find core elements but for the application to autonomous systems,  side payments are not  natural. 
Finding core elements in NTU games, however, tends to be computationally much more difficult, cf. \cite{conitzer}.  Even for this multiflow game, the only previous  result is due to Yamada and Karasawa  who give a procedure to find a core element when the supply graph is a path.   We extend their work by designing an algorithm, called {\sc incorporate}, which  produces  many different core elements. 

We use our algorithm to evaluate several specific instances by running {\sc incorporate} to  generate multiple core vectors. We call these  the {\em empirical core} of the game.
    We find that sampled core vectors are  more consistent with respect to social welfare (SW) than for fairness (minimum payout). For SW they tend to do as well as the optimal linear program value $LP_{sw}$. In contrast, there is a larger range across  fairness in the empirical core;  the fairness values  tend to be   worse than the optimal fairness LP value $LP_{fair}$. We study this discrepancy in the setting of general graphs with single-sink demands.
     In this setting we give an algorithm which produces core vectors that simultaneously maximize SW and fairness. This leads to the following bicriteria result  for general games.
 Given any core-producing algorithm and 
 any $\lambda \in (0,1)$,  one can produce an approximate core vector
 with  fairness (resp. social welfare) at least $\lambda LP_{fair}$ (resp. $(1-\lambda) LP_{sw}$). 

\end{abstract}

\section{Introduction}

\subsection{Coalition Games}

A (non-transferrable) {\em coalition game} consists of a set of players $N$ and a {\em characteristic function} $\Util$
which maps
each $S \subseteq N$ to a subset of $\mathbb{R}^N$. 
The interpretation is that $\Util(S)$ denotes the set of possible payoff (or utility) vectors $\util$ available to players of $S$ if they decide to cooperate
(we assume that $\util_i=0$ if $i \not\in S$).
A general theme in cooperative game theory is to find strategies whereby the {\em grand coalition}, namely $N$ itself,  becomes a stable set of partners.
In other words, we seek a payoff vector $\util \in \Util(N)$ which has no {\em breakaway set}, that is a proper subset $S$ of the players who could do better if they deviate from a grand coalition strategy which produces $\util$.
 We now define this  formally.

 For two vectors $x,y \in\mathbb{R}^N $, we say  $x$ {\em dominates} $y$ {\em on $S$} if $x_i > y_i$ $\forall i \in S$;
 we  write $x \succ_S y$.   
 Let $S$ be a proper subset of $N$ and  $\util \in \Util(N)$.  We call a second payoff vector $\util' \in \Util(S)$ an {\em $S$-deviation from $\util$} if
 $\util' \succ_S \util$; we  also refer to the strategies inducing this payoff as a deviation.
We call $S$ a {\em breakaway set of $\util$} if there is some  {\em $S$-deviation}.

 The {\em core} of a coalition game is the set of vectors $\util \in \Util(N)$ which have no breakaway sets.
 Thus core vectors represent payoffs $\util$ to the grand coalition which are stable in the sense that no subset of players is motivated to defect from the strategies which induce $\util$. 

Core vectors are the idealized  outcomes of rational play in a  coalition game, but what role  can they play in practice? Given that there  may be many payoff vectors in the core, 
{\bf Q1.} {\em which ones are preferable and hence should be incentivized?}
In  a game without side payments  it becomes essential to understand payouts for individual  players.  
We develop theory in order to produce a large number of core vectors for
an (NTU)  multi-flow game. This allows one to sample from the associated {\em empirical core} in order to compare core vectors with respect to social welfare, fairness and other performance metrics.

\subsection{The Multiflow Coalition Game}

The players in a {\em multicommodity flow coalition game} consist of the nodes in a given {\em supply graph} $G=(V,E)$; we refer to $e \in E$ as a  supply edge. In addition, we are given {\em capacities} $c_v$ on each node $v \in V$. 
We are also given a {\em commodity graph} $H=(V,F)$. We  refer to $e \in F$ as a commodity edge; each such edge has an associated  {\em demand} $d_e \geq 0$.

{\em Strategies} in a flow coalition game arise from feasible flows in $G$ for the commodities $H$.
For each commodity $uv \in F$, we denote by $\mathcal{P}_{uv}$ the set of (simple) paths in $G$ joining $u$ and $v$; $\mathcal{P}$ denotes the set of all simple paths.  A {\em flow} is then defined by a non-negative assignment of flow $f_P$ to each path $P$ joining the endpoints of some commodity. Formally, $f$ is a feasible flow if it satisfies:

\begin{enumerate}
\item $\sum_{P \ni v} f_P \le c_v \quad \forall v \in V$ ~~~~ {\em Capacity constraint}

\item $\sum_{P \in \mathcal{P}_{uv}} f_P \le d_{uv} \quad \forall uv \in F=E(H)$~~~~~~~{\em Demand constraint}

\item $f_{P} \ge 0, ~~~P \in \mathcal{P}$.
\end{enumerate}

We let $\mathcal{F}(G,c,d)$ denote the set of all feasible flows. We are also interested in the
strategies available to a subset $S$ of players. We denote by  $\mathcal{F}(G[S],c,d)$, or simply $\mathcal{F}^S$, the feasible flows in
 $G[S]$, the subgraph of $G$ induced by $S$.\\

Each feasible flow induces {\em payoffs} (or {\em utilities}) for the players as follows.
For any $S \subseteq V$, $f \in  \mathcal{F}^S$, and $v \in V$ we define
 $\util_v(f) = \sum_{wv \in E(H)} f_{wv}$ as the sum of all flows that terminate at $v$ (we make no distinction between traffic coming from or going to $v$). Note that if $v \not\in S$, then $\util_v(f)=0$. The set of payoffs  available to a coalition 
 is thus $\Util(S) = \{ \util(f) : f \in \mathcal{F}^S \} $. Given that the set of feasible flows is convex and that the utility function is linear, $\Util(S)$ also defines a convex polyhedron.\\

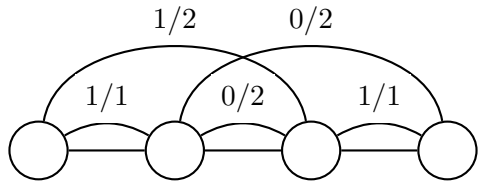
\begin{figure}[ht]
\label{fig:nonconvex}
     \centering
      \begin{tikzpicture}[-,>=stealth',auto,node distance=1.0cm,
                        thick,main node/.style={circle,draw,minimum size=2em,font=\sffamily\Large\bfseries}]
    
        \node[main node] (1) {};
        \node[main node] (2) [right = of 1] {};
        \node[main node] (3) [right = of 2] {};
        \node[main node] (4) [right = of 3] {};

        \Edge(1)(2)
        \Edge(2)(3)
        \Edge(3)(4)

        \tikzset{EdgeStyle/.append style = {bend left = 80}}
        \Edge[label=$1/2$](1)(3)
        \Edge[label=$0/2$](2)(4)
        
        \tikzset{EdgeStyle/.append style = {bend left = 30}}
        \Edge[label=$1/1$](1)(2)
        \Edge[label=$0/2$](2)(3)
        \Edge[label=$1/1$](3)(4)

    \end{tikzpicture}
    \caption{A feasible flow for capacity vector $(\infty,2,2,\infty)$ with a resulting utility vector $(2,1,2,1)$. Case analysis shows that this, and the vector $(1,2,1,2)$ are in the core, but the convex combination   $(\frac{3}{2},\frac{3}{2},\frac{3}{2},\frac{3}{2})$  is not.  $f/d$ indicates the flow $f$ and demand $d$ for a commodity edge.}
\end{figure}

In order to study question {\bf Q1},  one   needs a method to  produce multiple  core vectors. Unfortunately, for  NTU games, there is no general method to (efficiently) compute even one!  A main computational stumbling block is that the core need not be convex (even for multiflow games, see Figure~\ref{fig:nonconvex}) and hence standard optimization techniques are not immediately forthcoming. To date, the only positive result  \cite{yamada2006}  is one that  exhibits a core vector in the case when $G$ is a path. 

We design a simple polynomial-time  algorithm
which produces many core vectors when the supply graph is a path. We then apply  our method to several games and compare the 
payoff vectors
which are generated, called the {\em empirical core}.   In Section~\ref{sec:empirical} we present our findings.  A key takeaway is:  in terms of social welfare, one doesn't go far wrong using any core element. In terms of fairness, however, there can be  a 50\%  gap between the average core vector and the maximum core vector fairness. 

In the next section we outline the technical plan for designing and analyzing our new algorithm.

\subsection{A Method for Computing (Many) Core Vectors}

In \cite{yamada2006} it is proved  that the following simple greedy procedure produces
a vector in the core of the multi-flow game when the supply graph $G$ is a path $123 \ldots n$.  That is, $V(G)=[n]$ and there exist edges $i(i+1)$ for each $i < n$. They process the nodes from ``left to right'' ({\em i.e.}, from smallest to largest). We call this the {\sc YK} algorithm. We denote this graph by $\ppath$.  For each node $i$, they then scan its incident demand edges $ij$ from largest $j$ to smallest. In {\em scanning} a demand $ij$ they {\em route} as much flow as possible on the (unique) path $P_{ij}$ joining $i$ and $j$. They then decrement node capacities accordingly. We refer to this as the core vector resulting from a {\em left-right scan}.\footnote{The article \cite{yamada2006} also refers to an extension to a special class of trees called spiders.} 
As we also run a type of greedy algorithm, we formalize the routing subroutine as follows.  

\vspace*{.3cm}

\begin{algorithm}
\SetAlgoLined
{\sc Route}\\
{\bf Input.} A supply graph $\ppath$ with node capacities $(C_v: v \in P)$, and a subpath $P_{k\ell}$ \\
        \hspace*{.5cm} $~~~~m \leftarrow \min_{v \in P_{k\ell}}\big\{C_v\big\}$ \\
  \hspace*{.5cm} $~~~~f_{k\ell} \leftarrow \min\big\{d_{k\ell},m\big\}$ \\
  \hspace*{.5cm} $~~~~C_v \leftarrow C_v - f_{k\ell} ~~ \forall v \in P_{k\ell} $
\end{algorithm}

\noindent
Here we use shorthand $f_{k\ell}$ to denote the flow variable $f_{P_{k\ell}}$.
Hence we route as much flow for a commodity $k\ell$ as possible, {\em i.e.},  the minimum of the smallest residual capacity on $P_{k\ell}$ and  the demand $d_{k\ell}$.

One property of the order in which  commodities are processed in the {\sc yk} algorithm is that if $[k\ell] \subset [ij]$, then $k\ell$ is processed before  $ij$.  We call this a {\em nested ordering}.  It is not the case that greedily routing flows in a nested order is sufficient to produce a core vector. The first flow in Figure \ref{fig:nest-counter} is produced by a nested ordering, and the second shows a breakaway set (the middle $4$ nodes).

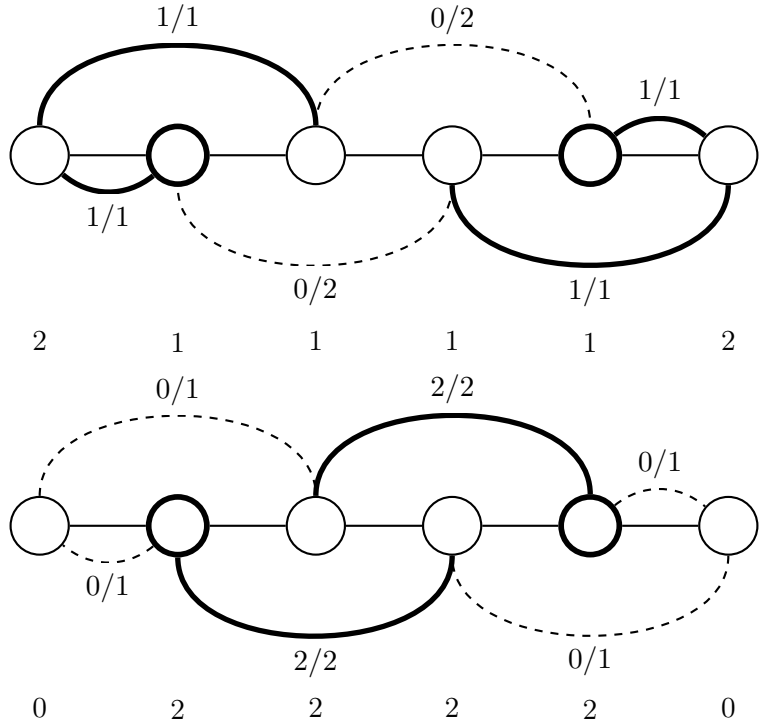
\begin{figure}[htbp]
     \centering
      \begin{tikzpicture}[-,>=stealth',auto,node distance=1.0cm,
                        thick,main node/.style={circle,draw,minimum size=2em,font=\sffamily\Large\bfseries}]
    
        \node[main node, label={[label distance=1.8cm]270:2}] (1) {};
        \node[main node, line width=0.75mm, label={[label distance=1.8cm]270:1}] (2) [right = of 1] {};
        \node[main node, label={[label distance=1.8cm]270:1}] (3) [right = of 2] {};
        \node[main node, label={[label distance=1.8cm]270:1}] (4) [right = of 3] {};
        \node[main node, line width=0.75mm, label={[label distance=1.8cm]270:1}] (5) [right = of 4] {};
        \node[main node, label={[label distance=1.8cm]270:2}] (6) [right = of 5] {};

        \Edge(1)(2)
        \Edge(2)(3)
        \Edge(3)(4)
        \Edge(4)(5)
        \Edge(5)(6)

        \tikzset{EdgeStyle/.append style = {bend left = 40}}
        \Edge[label=$1/1$,lw=2pt](2)(1)
        \Edge[label=$1/1$,lw=2pt](5)(6)

        \tikzset{EdgeStyle/.append style = {bend left = 90}}
        \Edge[label=$1/1$,lw=2pt](1)(3)
        \Edge[label=$1/1$,lw=2pt](6)(4)
        
        \Edge[label=$0/2$, style=dashed](4)(2)
        \Edge[label=$0/2$, style=dashed](3)(5)
        
    \end{tikzpicture}

      \begin{tikzpicture}[-,>=stealth',auto,node distance=1.0cm,
                        thick,main node/.style={circle,draw,minimum size=2em,font=\sffamily\Large\bfseries},x=0.35cm,y=0.35cm]
    
        \node[main node, label={[label distance=1.75cm]270:0}] (1) {};
        \node[main node, line width=0.75mm, label={[label distance=1.75cm]270:2}] (2) [right = of 1] {};
        \node[main node, label={[label distance=1.75cm]270:2}] (3) [right = of 2] {};
        \node[main node, label={[label distance=1.75cm]270:2}] (4) [right = of 3] {};
        \node[main node, line width=0.75mm, label={[label distance=1.75cm]270:2}] (5) [right = of 4] {};
        \node[main node, label={[label distance=1.75cm]270:0}] (6) [right = of 5] {};

        \Edge(1)(2)
        \Edge(2)(3)
        \Edge(3)(4)
        \Edge(4)(5)
        \Edge(5)(6)

        \tikzset{EdgeStyle/.append style = {bend left = 40}}
        \Edge[label=$0/1$, style=dashed](2)(1)
        \Edge[label=$0/1$, style=dashed](5)(6)

        \tikzset{EdgeStyle/.append style = {bend left = 90}}
        \Edge[label=$0/1$, style=dashed](1)(3)
        \Edge[label=$0/1$, style=dashed](6)(4)
        
        \Edge[label=$2/2$,lw=2pt](4)(2)
        \Edge[label=$2/2$,lw=2pt](3)(5)

    \end{tikzpicture}
    \caption{Greedy nested ordering is not sufficient to create core vectors.}
    \label{fig:nest-counter}
\end{figure}

In some sense, the issue in this example is that we grow two ``islands'' of flows (the left and right ends of the paths) but the internal section formed a breakaway.
To address this issue we seek  nested orderings which are gown in a contiguous fashion. We build a flow iteratively starting from any node
and then  {\em incorporating}  nodes one by one, enforcing  the incorporated nodes to induce a connected subgraph.   We now present our algorithm
for an arbitrary connected supply graph $G$.

\vspace*{.5cm}

\begin{algorithm}[H]
\SetAlgoLined

{\sc incorporate}

{\bf Input.} Given a Game (G,c,H,d), starting node $r$\\
{\bf Output.} a flow $f$ such that $\util(f) \in \textit{Core}(G,c,H,d)$\\

$f \leftarrow 0$, $~~S \leftarrow \{r\}$, $~~(C_v \leftarrow c_v: v \in V(G))$\\
\While{$S \subset V$}{
        Let $v \in N(S)$\\
        Let T be a stack with elements of S in increasing  order by distance from v\\
        $S \leftarrow S+v$\\
        \While{$T \neq \emptyset$}{
            $k \leftarrow Pop(T)$\\
            \If{$kv \in E(H[S])$}{
                \bf{route}$\big(P_{kv}\big)$\\
            }
        }
}
Return $f$
\end{algorithm}
\vspace*{.5cm}

When a player $v$ is incorporated we route all the commodities incident to that player whose other end point $u$ is already incorporated. We do this to respect a nested ordering, {\em i.e.}, we process nodes $u$ closer to $v$ first. The terminology is especially fitting as both "incorporate" and "core" etymologically derive from the Latin {\em corpus}. 
The proof of our main theorems shows that at any point in the execution of the algorithm,  the current flow is in the core of the sub-game for the incorporated set.

\begin{theorem}
\label{thm:main}
If $G$ is a path, then {\sc incorporate}  returns a flow whose payoff vector is in the core. 
\end{theorem}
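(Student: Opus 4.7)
The plan is to prove, by induction on the iterations of the outer while loop, the stronger invariant that after each iteration the current flow $f$ lies in the core of every sub-subgame $(G[B],c,H[B],d)$ for $B$ a subpath of the currently-incorporated set $S$. Taking $B=S=V$ at termination yields the theorem, and the base case $S=\{r\}$ is immediate since the only feasible flow is $0$.

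For the inductive step, let $v$ be the node added in the current iteration, set $S'=S\cup\{v\}$, and let $f'$ be the updated flow. Since $G$ is a path and $v\in N(S)$, the set $S'$ is again a subpath with $v$ at one of its two ends. Suppose toward a contradiction that some coalition $B\subseteq S'$ breaks away from $f'$ via a deviation $g$ on $G[B]$. Any deviation decomposes along the connected components of $G[B]$, so we may assume $B$ is itself a subpath. If $v\notin B$, then $B\subseteq S$ and, since the current iteration only routes commodities incident to $v$ (none of which lie in $H[B]$), the restricted flow $f'|_{G[B]}$ equals $f|_{G[B]}$, to which the inductive hypothesis applies directly, yielding the contradiction.

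The substantive case is $v\in B$; write $B=[b,v]$ with $v$ at the end of $B$, and let $K_v=\{k:kv\in E(H)\}$. The crucial structural property is that {\sc incorporate} processes the commodities $kv$ in order of increasing distance, handling every $k\in B\cap K_v$ (phase~1) before any $k<b$ (phase~2). Let $i^{*}$ be the largest $i\in B\cap K_v$ with $g(iv)>f'(iv)$. If no such $i^*$ exists, then $g(iv)\leq f'(iv)$ for every $i\in B\setminus\{v\}$ and a short calculation shows that $g$ restricted to $H[B\setminus\{v\}]$ strictly dominates $f|_{G[B\setminus\{v\}]}$ on the subpath $B\setminus\{v\}\subseteq S$, contradicting the inductive hypothesis. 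Otherwise, from $g(i^{*}v)\leq d_{i^{*}v}$ we get $f'(i^{*}v)<d_{i^{*}v}$, so the greedy step for $i^{*}v$ was capped by an intermediate bottleneck: some node $u^{*}\in[i^{*},v]$ is driven to zero residual at the end of that step. A capacity-accounting argument then shows that every commodity processed later in the same iteration (whether in phase~1 with $k<i^{*}$, or in phase~2) carries zero flow across $u^{*}$, so that $u^{*}$ is fully saturated in $f'$. Combining this saturation with the stronger invariant applied to the shorter subpath $[b,u^{*}]\subseteq S$ should produce a node in $B$ at which $g$ fails to improve upon $f'$, contradicting strict domination.

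The main obstacle is making this last step watertight when there are commodities in $H[B]$ that straddle $u^{*}$, or when $f$ already routed ``boundary-crossing'' commodities through $u^{*}$ before $v$ was incorporated: in either case $g$ is free to use the capacity at $u^{*}$ differently than $f'$ did, and the node at which $g$ must fail to improve is not obvious. Reconciling this cleanly appears to require either an LP-duality certificate exploiting the max-flow/min-cut structure on a path, or a careful sub-induction on $|B|$ that peels off either $[b,u^{*}]$ or $[u^{*},v]$ and reapplies the argument. This is precisely where the generalization from the Yamada--Karasawa left-to-right scan to {\sc incorporate}'s contiguous-growth routing will demand the most care.
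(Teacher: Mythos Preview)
Your proposal is not a complete proof, and the gap you flag at the end is genuine. When the deviation $g$ routes commodities that straddle the tight node $u^{*}$, restricting $g$ to either $[b,u^{*}]$ or $[u^{*},v]$ and invoking the inductive hypothesis fails: the restriction discards precisely the flow that $g$ uses to improve a node on one side while spending capacity on the other, so the restricted flow need not strictly dominate. Nor does saturation of $u^{*}$ by itself single out a player at which $g$ must fail to improve, since $g$ is free to allocate $c_{u^{*}}$ among straddling commodities entirely differently from $f'$. A direct sub-induction would have to track how that capacity is shared across the cut at $u^{*}$, which amounts to reconstructing a dual object.

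The paper takes exactly the LP-duality route you mention as one possible fix. It builds Farkas certificates for general graphs (Lemma~\ref{lem:certificate} and the Pre-Certificate Lemma~\ref{lem:pseudo1}) and then specializes to paths via \emph{anchor sets} (Lemma~\ref{lem:pseudo}): for a tight node one takes $Y$ to be that node alone and $W$ to be its left anchors, i.e.\ the left endpoints of positive commodities transiting it. The temporal observation you make---that at the moment $\{i,j\}$ is first processed no positive commodity yet reaches past the just-incorporated endpoint---is exactly what verifies the containment condition \textbf{PA} for the leftmost tight node of $[i,j]$ (Lemma~\ref{lem:path-timing-cert}); monotonicity of certificates in the flow (Lemma~\ref{lem:up-closed}) then carries each certificate forward to termination. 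So your outer induction and your use of the incorporation order are aligned with the paper, but the missing ingredient is the certificate machinery that converts ``$u^{*}$ is tight with anchors inside $S$'' into an infeasibility witness for the deviation LP, rather than trying to locate a single bad node combinatorially.
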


We prove Theorem~\ref{thm:main} in two main steps.  In Section~\ref{sec:certificate} we give a class of {\em certificates}. For a given flow $f$ and proper subset $S \subseteq V$
either $S$ has a deviation, or one of our certificates  guarantees that $S$ is not a breakaway set for the payoffs of $f$. These arguments are based on duality for LPs with strict inequalities.
The results  apply to general supply graphs.

In Section~\ref{sec:pathcert} we specialize the structure of these certificates to supply graphs which are paths.  We then use this structure to show that for any output of {\sc incorporate}  and any $S$, we can produce such a certificate.

In \cite{yamada2006} it is reported that their approach extends to supply graphs which are spiders, although an argument is not provided.
 In Section~\ref{sec:pathcert} we  show how to use  {\sc incorporate} to
 derive an algorithm for finding core vectors in spiders. 

\begin{corollary}
\label{cor:spiders}
Let $\mathcal{T}_c$ be the class of trees (so-called spiders) with at most $1$ node of degree greater than two.
There is a polytime algorithm for producing core elements for any  multi-flow game associated with a tree in $\mathcal{T}_c$. 
\end{corollary}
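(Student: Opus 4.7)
The plan is to invoke {\sc incorporate} with a carefully chosen node order and extend the certificate machinery from Sections~\ref{sec:certificate} and~\ref{sec:pathcert} from paths to spiders. If the spider has no node of degree strictly greater than two then it is itself a path and Theorem~\ref{thm:main} applies directly. Otherwise let $c$ denote the unique high-degree vertex and $L_1,\dots,L_k$ the legs emanating from $c$ (each $L_i$ a path, one endpoint a neighbour of $c$). The proposed algorithm runs {\sc incorporate} with starting node $r=c$, and at each iteration picks the next unincorporated vertex of the current leg moving outward from $c$, switching to $L_{i+1}$ only once $L_i$ has been fully absorbed.

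Correctness is then established by producing, for each proper subset $S\subsetneq V$, a dual certificate witnessing that $S$ is not a breakaway set for the output flow. The key structural dichotomy is that $G[S]$ is either a disjoint union of sub-paths (when $c\notin S$) or is itself a sub-spider (when $c\in S$). In the first case the set $S$ cannot route any commodity between distinct legs, so only within-leg demands are relevant to its strategies; moreover, the restriction of the {\sc incorporate} output to a single leg coincides with the flow the path algorithm would produce on that leg using residual capacities, since the only effect of the cross-leg flow through $c$ is a reduction of $C_c$ that is invisible to paths avoiding $c$. Hence the path certificate from Section~\ref{sec:pathcert} applies to each component of $G[S]$ separately and rules out any deviation.

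The main obstacle is the case $c\in S$. Here I would lift the path certificate to a spider certificate by combining the per-leg certificates along a shared tight constraint at $c$. Concretely, any hypothetical cross-leg deviation must use $c$, so it suffices to exhibit a dual whose tight constraints include $c$'s capacity and whose per-leg support matches the certificates guaranteed by the path case. The nested, outward-from-$c$ order used by our execution produces exactly the monotonicity on each leg that the path duality argument requires, and one then glues these dual vectors at $c$ using complementary slackness in the style of Section~\ref{sec:certificate}. Polynomial running time is immediate, as {\sc incorporate} performs $O(|V|)$ incorporation steps each requiring at most $O(|V|)$ {\sc Route} calls.
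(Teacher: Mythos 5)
Your plan diverges from the paper's proof at the crucial point: the paper does \emph{not} run {\sc incorporate} on the original spider instance. It first deletes every cross-leg commodity (any $uv$ with $u,v$ on different legs, neither equal to the root, so that the commodity would transit the root), runs {\sc incorporate} from the root on this modified game, certifies all coalitions there, and only afterwards adds the deleted commodities back with an arbitrary routing. Your proposal keeps the cross-leg demands in play, and this is where it breaks. First, your claim that ``the only effect of the cross-leg flow through $c$ is a reduction of $C_c$ that is invisible to paths avoiding $c$'' is false: a positive cross-leg flow path from $u\in L_1$ to $v\in L_2$ transits the internal nodes of \emph{both} legs, so it consumes capacity all along $[u,c]$ and $[c,v]$, and its routing is interleaved with the within-leg routings when later legs are processed. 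Hence the restriction of your output to a leg is not the flow that the path algorithm would produce on that leg with residual capacities, and the path certificates of Section~\ref{sec:pathcert} cannot simply be invoked componentwise for the case $c\notin S$.

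Second, the case $c\in S$ --- which is the real content of the corollary --- is only gestured at. ``Gluing per-leg dual vectors at $c$ using complementary slackness'' is not a construction: the certificates here are the combinatorial pre-certificates of Lemma~\ref{lem:pseudo1}, and a combined support $Y$ containing tight nodes on two different legs would violate \textbf{P4} as soon as a positive cross-leg path transits both, while nothing in your sketch verifies \textbf{P2} or \textbf{P3} in the presence of flows transiting the root. The paper avoids all of this: in the modified instance every routed commodity lies inside a single root-plus-leg path, so for a root-containing coalition it can take, leg by leg, the flow at the time the pair $\{r,t_i\}$ is processed and either exhibit a \emph{right} anchor certificate at a rightmost tight node of that leg, or conclude (if no leg has one) that the root is $S$-content and apply Lemma~\ref{lem:s-cont}; certificates survive the later flow increases by Lemma~\ref{lem:up-closed}, including the arbitrary re-insertion of the cross-leg demands at the end. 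Without the demand-deletion step, your argument has a genuine gap at both of its stages, and the missing steps are not routine to fill.
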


\subsection{The Empirical Core and Improving Fairness }
\label{sec:emp}

In Section~\ref{sec:empirical} we leverage our main algorithmic result.  We  generate an array of games based on different demand models.
For each game,  we can run {\sc  incorporate} from any starting node and  incorporate nodes in any (valid) order.
This produces a large number of distinct core elements. We call this set of samples the {\em empirical core} and we explore their properties. Two of the most prominent features are:
\begin{itemize}
    \item {\bf Observation 1.} {\em Sampled core payoffs have social welfare which is very close to the theoretical optimum,}
      \item {\bf Observation 2.} {\em There is a  wide range on fairness (minimum payoff) amongst the sampled core payoffs,} 
\end{itemize}

\noindent
thus suggesting that incentives are required in order to achieve both the maximum social welfare and fairness.

There is one setting where it is possible
to balance the competing objectives of social welfare and fairness exactly.  That is, we can find a core solution which achieves both the optimal fairness and social welfare. 
Instead of restricting the supply graph (to trees or spiders), we   allow general supply graphs but  restrict the topology of the commodity graph.
In this vein, it becomes illuminating to consider the class of  {\em single-sink}  commodity graphs, where all commodities are incident to a ``root'' node $t$.
We can associate each such  commodity with {\em terminals} $s_i \in V(G)$ and demand $d_i$, $i=1,2,3 \ldots k$.  A payoff
vector can be viewed as $x \in \mathbb{R}^{[k]}_{\geq 0}$ for which there is a feasible flow that  routes $x_i$ flow from each $s_i$ to $t$;
the payoff to each $i$ is $x_i$ and to $t$ it is $\sum_i x_i$. It  easily follows that any $x$ which maximizes $\sum_i x_i$ is a core element. This is because any breakaway set must contain $t$, and it already achieves its maximum utility (globally content in our parlance).

We use the standard LPs to study the quality of our core vectors. For instance, the {\em fairness LP} for a multiflow game consists of the standard formulation together with a parameter $\tau \geq 0$ with the following constraints
\[
\sum_{j} f_{ij} \geq \tau ~~~~~\mbox{for each $i$}.
\]

\noindent
If this LP is feasible we say the instance has {\em fairness} of at least $\tau$. We may similarly define the social welfare LP.

In Section~\ref{sec:fairness} we show the following.

\begin{theorem}
\label{thm:SS}
Let $\mathcal{F}$ be the family of multiflow instances $(G,H)$ where
$G$ is any capacitated supply graph  and $H$ is a single-sink commodity graph. Then there is a polytime algorithm which given an instance in $\mathcal{F}$, produces a core vector which simultaneously achieves the optimal LP fairness and LP  social welfare.
\end{theorem}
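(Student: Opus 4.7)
The plan is to exhibit a social-welfare-optimal flow whose terminal throughputs $(x_1,\dots,x_k)$ also attain the LP fairness optimum. The excerpt already observes that any flow maximizing $\sum_i x_i$ is automatically in the core: every breakaway set $S$ must contain $t$, since otherwise all members of $S$ receive zero payoff in any deviation (all commodities are incident to $t$) and therefore cannot strictly improve on a nonnegative grand-coalition payoff; and $t\in S$ cannot strictly improve because $\sum_i x_i$ already realizes the global maximum throughput. So the task reduces to producing a feasible flow whose source-wise throughputs simultaneously satisfy $\sum_i x_i=LP_{sw}$ and $\min_i x_i\geq LP_{fair}$.

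The structural engine is the set
\[
P := \bigl\{\, x\in\mathbb{R}^{[k]}_{\geq 0} : x \text{ is realizable by some feasible flow} \,\bigr\},
\]
which I claim is a polymatroid with rank function $r(S)$ equal to the maximum total flow from $\{s_j:j\in S\}$ to $t$, subject to the node capacities of $G$ and to the demand caps $x_j\leq d_j$. Monotonicity is immediate; submodularity $r(S)+r(T)\geq r(S\cup T)+r(S\cap T)$ is the classical submodularity of multi-source/single-sink max-flow, which I would derive by adjoining a super-source $s$ with arcs $s\to s_j$ of capacity $d_j$, splitting nodes to reduce node capacities to edge capacities, and invoking submodularity of $s$-$t$ cuts via a standard uncrossing argument on minimum cuts.

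With $P$ identified as a polymatroid I invoke the standard augmentation property: for any $x\in P$ with $\sum_i x_i < r([k])$, there exist a coordinate $i$ and $\epsilon>0$ with $x+\epsilon e_i\in P$. Initialize $x^{(0)}:=x_{fair}$, an optimizer of the fairness LP, so $\min_i x^{(0)}_i=LP_{fair}$. Iteratively apply augmentation; each step strictly increases $\sum_i x_i$ without ever decreasing any coordinate. After finitely many augmentations (each moves to a vertex with strictly larger throughput) one reaches $x^\star\in P$ with $\sum_i x^\star_i=r([k])=LP_{sw}$ and $x^\star\geq x^{(0)}$ componentwise. Consequently $\min_i x^\star_i\geq LP_{fair}$, and since $x^\star$ is SW-optimal the induced payoff vector lies in the core.

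For a polynomial-time algorithm one bypasses explicit augmentations and solves two LPs in sequence: first compute $LP_{fair}$ via the standard fairness LP, then solve the multiflow LP maximizing $\sum_i x_i$ with the added constraints $x_i\geq LP_{fair}$. The polymatroid augmentation argument certifies that the second LP attains value $LP_{sw}$, and flow decomposition recovers an actual flow realizing $x^\star$. The main technical obstacle I anticipate is a clean formalization of the submodularity of $r$ that accommodates both node capacities and demand caps simultaneously; once that is pinned down, the polymatroid machinery closes the argument without further effort.
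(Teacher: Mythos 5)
Your proposal is correct and rests on the same two pillars as the paper's proof: (i) for single-sink instances any flow maximizing $\sum_i x_i$ is automatically a core vector, since a breakaway set must contain $t$ and $t$ is already globally satisfied, and (ii) the set $P$ of routable throughput vectors is a polymatroid. Where you diverge is in how polymatroidality is exploited. The paper runs a greedy procedure in two phases --- a throttled phase that caps each $x_i$ at a fairness target $\tau$, followed by an unrestricted greedy pass --- and recovers the optimal $\tau$ by binary search; polymatroidality guarantees the second pass reaches a maximum flow. You instead start from an optimizer $x^{(0)}$ of the fairness LP and invoke the dominance-by-a-base property (every point of a polymatroid extends, coordinatewise monotonically, to a base), then implement this by solving two LPs: the fairness LP, and the social-welfare LP with the added constraints $x_i \geq LP_{fair}$. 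This buys you an exact handle on $LP_{fair}$ without binary search and a certification that the constrained LP still attains $LP_{sw}$, at the cost of having to justify polymatroidality with demand caps folded in (your super-source/node-splitting/uncrossing sketch is the standard way and is fine; the paper simply cites Schrijver). One small inaccuracy: an augmentation step need not move to a vertex, so ``finitely many augmentations'' should instead be argued by maximally saturating coordinates one at a time (a single pass suffices, as in the greedy algorithm); this does not affect correctness, and in any case your LP-based implementation sidesteps it.
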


\noindent
This exact result is in contrast to the single-sink setting where agents pay for edges used to carry traffic for single-sink flows (that is minimum spanning tree cost sharing games)  \cite{gupta2004cost}.

 Unfortunately, for general demands  - and even for line graphs - we are doomed to fail with this approach.  Consider $P_4$ with unit capacity nodes. We also have two unit demand commodities: $14,23$. One easily deduces that the only core vector is obtained by the flow which routes 1 unit of demand for $23$. Hence, the fairness is $0$ for any core vector but the fairness LP achieves $1/2$.

One can however balance these objectives if  approximate core solutions are allowed -- see Section~\ref{sec:notation}.

\begin{theorem}
\label{thm:bicriteria}
Consider a family of multiflow games for which we have a core-producing algorithm.
Consider an instance  for which the fairness is at least $\tau$.
For any $\lambda \in (0,1)$ we can compute a $\frac{1}{1-\lambda}$-approximate core vector with fairness at least $\lambda \tau$.
\end{theorem}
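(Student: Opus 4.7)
The plan is to take a convex combination of a core flow (obtained from the assumed algorithm) with an optimal flow for the fairness LP. Specifically, let $f^c$ be a flow whose payoff vector $\util(f^c)$ is in the core, and let $f^\star$ be a feasible flow achieving fairness $\tau$, i.e., $\util_i(f^\star) \geq \tau$ for every player $i$. Set $g = \lambda f^\star + (1-\lambda) f^c$. Since the feasible flow polytope $\mathcal{F}(G,c,d)$ is convex, $g$ is a feasible flow for the grand coalition, and by linearity of the utility function $\util_i(g) = \lambda \util_i(f^\star) + (1-\lambda) \util_i(f^c)$ for every $i \in V$.

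I would then verify the two claims separately. For fairness, since $\util_i(f^\star) \geq \tau$ and $\util_i(f^c) \geq 0$, linearity immediately yields $\util_i(g) \geq \lambda \tau$ for all $i$, which is the fairness guarantee. For the approximate core property, I rely on the (multiplicative) notion of approximation from Section~\ref{sec:notation}: a payoff vector $\util$ is an $\alpha$-approximate core vector if no coalition $S$ admits a deviation $\util' \in \Util(S)$ with $\util'_i > \alpha \cdot \util_i$ for every $i \in S$. From the convex combination we get the key pointwise inequality
\[
\util_i(g) \;\geq\; (1-\lambda)\, \util_i(f^c) \qquad \forall i \in V,
\]
equivalently $\util_i(f^c) \leq \tfrac{1}{1-\lambda}\, \util_i(g)$.

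Now suppose, for contradiction, that some coalition $S$ has a deviation $\util' \in \Util(S)$ with $\util'_i > \tfrac{1}{1-\lambda}\, \util_i(g)$ for every $i \in S$. Chaining with the inequality above gives $\util'_i > \util_i(f^c)$ for every $i \in S$, so $\util'$ is a genuine (not approximate) $S$-deviation from the core vector $\util(f^c)$, contradicting that $\util(f^c)$ lies in the core. Hence $\util(g)$ is a $\tfrac{1}{1-\lambda}$-approximate core vector, completing the proof.

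There is no real obstacle here beyond choosing the right definition of approximation to couple with convex combinations; the only subtle point is ensuring that the guarantee for $f^c$ is used against the \emph{scaled} deviation, which is exactly what the multiplicative notion of approximate core affords. It is worth noting that the argument is oblivious to the structure of the supply or commodity graphs and treats the core-producing algorithm as a black box, which is why it applies to every family for which Theorem~\ref{thm:main} or Corollary~\ref{cor:spiders} (or any future result) gives a core oracle.
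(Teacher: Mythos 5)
Your proof is correct, and it rests on the same blending idea as the paper's: combine a ``fair'' payoff scaled by $\lambda$ with a core-type payoff scaled by $(1-\lambda)$, using convexity for feasibility and the multiplicative slack in the $\frac{1}{1-\lambda}$-core definition to retain stability. The one genuine difference is which core vector you mix in: the paper runs the core oracle on the instance scaled down by $(1-\lambda)$ and then invokes Lemma~\ref{lem:scale} (and the monotonicity reasoning of Lemma~\ref{lem:up-closed}) to conclude the sum stays in the approximate core, whereas you take a core flow $f^c$ of the \emph{original} instance, form $g=\lambda f^\star+(1-\lambda)f^c$, and argue by direct contradiction that any deviation beating $\frac{1}{1-\lambda}\util(g)$ would already beat $\util(f^c)$. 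Your variant is slightly more self-contained: it needs only the convexity of $\mathcal{F}(G,c,d)$ and linearity of $\util$, and it does not require the game family to be closed under scaling (nor the lemmas), while the paper's formulation is phrased so as to apply to arbitrary downwards-closed games where one reasons with payoff sets rather than flows. Both arguments are valid and yield the stated fairness bound $\lambda\tau$.
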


We state this here in terms of the multiflow game, but the theorem holds much more generally. We prove the general case in Section~\ref{sec:fairness}. 


\subsection{Related Work}

The Multicommodity Flow Game was originally introduced by Papadimitriou to model incentives in internet routing \cite{Papadimitriou2001}. In this model each node or player represents an autonomous system (AS). Each AS is a component of the internet administrated  (usually) by a single  operator, such as a university or corporation. 
The assumption is that each AS  desires to  route as much of traffic originating from its users as possible. On the other hand ASs must cooperate with each other to allow their customers access to the broader network.

One difficulty is that the core may not a exist. Scarf however offers a sufficient condition for the core to be non-empty. To introduce this we first define {\em balanced collections}. A collection of coalitions $T \subseteq \mathcal{P}(V)$ is {\em balanced} if there exists weights $w_S$ such that for every player $i$, $\sum_{S \ni i} w_S=1$. If all weights are in $\{0,1\}$, then this is exactly a partition of $V$. We say that a payoff $\util$ is {\em attainable} by $S$ if $\util \in \Util(S)$. A game is {\em balanced} if for all balanced collections $T$, if $\util^S$ is attainable for all $S \in T$, then $\util$ is attainable for the grand coalition. Scarf's theorem  says that every balanced game has a non-empty core \cite{scarf1967core}.

Markakis and Saberi \cite{Markakis2005} show that the (NTU or TU) Multicommodity Flow Game is balanced and thus has a non-empty core. In fact as every sub-game of a multiflow game is still another multi-flow game, this result shows that these games are {\em totally balanced}. This leads to a method for finding a core vector in the TU case only (from a construction based on LP duality). For the NTU multicommodity flow game, the only efficient algorithm for computing a core vector is the algorithm due to  Yamada and Karasawa \cite{yamada2006}, denoted earlier by {\sc yk}.

\subsection{Notation and Basics}
\label{sec:notation}

We say  a  payoff vector $\util \in \Util(N)$ is
in the {\em $(1+\epsilon)$-core} of a game, if there is no
proper subset $S$ and $\util' \in \Util(S)$ such that
$\util' \succ_S (1+\epsilon) \util$.
We refer to a game as {\em downwards-closed} if for each $S$, $\Util(S)$
is a downwards-closed polytope in $\mathbb{R}^N_{\geq 0}$. The following is straightforward.  
\begin{lemma}
\label{lem:scale}
Let $\Gamma$ be a downwards-closed game and $\Gamma'$ be the game where
$\Util'(S)=\{\frac{1}{1+\epsilon} \util:  \util \in \Util(S)\}$. Then any core vector in $\Gamma'$ is a $(1+\epsilon)$-approximate core vector for $\Gamma$.
\end{lemma}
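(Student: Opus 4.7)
The plan is to argue by a direct scaling contradiction, using downwards-closedness only to guarantee that the candidate vector lives in $\Util(N)$ in the first place.

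First, let $\util^{*} \in \Util'(N)$ be a core vector of $\Gamma'$. By definition of $\Util'$, we have $\util^{*} = \frac{1}{1+\epsilon}\util$ for some $\util \in \Util(N)$, and since $\Gamma$ is downwards-closed and $\util^{*} \le \util$ coordinatewise, we also have $\util^{*} \in \Util(N)$. So $\util^{*}$ is a legitimate candidate payoff in $\Gamma$.

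Next I would verify the approximate-core condition in $\Gamma$ by contraposition. Suppose $\util^{*}$ is \emph{not} in the $(1+\epsilon)$-core of $\Gamma$: there exists a proper coalition $S$ and $\hat{\util} \in \Util(S)$ with $\hat{\util} \succ_S (1+\epsilon)\util^{*}$, i.e.\ $\hat{\util}_i > (1+\epsilon)\util^{*}_i$ for every $i \in S$. Now consider the scaled vector $\tilde{\util} := \frac{1}{1+\epsilon}\hat{\util}$. By the definition of $\Util'(S)$ we have $\tilde{\util} \in \Util'(S)$, and dividing the strict inequality above by $(1+\epsilon)$ gives $\tilde{\util}_i > \util^{*}_i$ for all $i \in S$, i.e.\ $\tilde{\util} \succ_S \util^{*}$. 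This exhibits an $S$-deviation from $\util^{*}$ in the game $\Gamma'$, contradicting the assumption that $\util^{*}$ is a core vector of $\Gamma'$.

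There is no real obstacle here; the statement is essentially a definitional reshuffling and the only subtlety worth mentioning is invoking downwards-closedness to ensure membership $\util^{*} \in \Util(N)$ (so that the notion of ``$(1+\epsilon)$-approximate core vector for $\Gamma$'' is even well-defined on $\util^{*}$). The rest is just the observation that strict coordinatewise domination is homogeneous under positive scaling, which lets us push any deviation in $\Gamma$ back to a deviation in $\Gamma'$.
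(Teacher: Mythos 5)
Your proof is correct and is exactly the intended argument: the paper states Lemma~\ref{lem:scale} without proof (calling it straightforward), and the natural argument is precisely yours, namely that downwards-closedness (together with nonnegativity of payoffs) puts the scaled vector in $\Util(N)$, and any $(1+\epsilon)$-deviation in $\Gamma$ scales down to a deviation in $\Gamma'$, contradicting core membership there. No gaps.
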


For most of the technical parts we work with a feasible flow $f$ with induced payoff vector $\util$. 
We call a node {\em tight} if all of its capacity is used up by the  flow $f$.
We say  commodity is {\em fully-routed} if the demand constraint is satisfied with equality.
If $f_P>0$ we refer to $P$ as a {\em positive flow path}. We say that a path $P$ {\em touches} the node $v$ if $v \in V(P)$. We say $P$ {\em transits} a node $v$ if $v \in I(P)$, where $I(P)$ denotes the internal nodes of $P$. A commodity $k\ell$ is called {\em positive} if
some flow is routed between $k$ and $\ell$. 

Recall that the flows from {\sc incorporate} follow a {\em nested ordering}.
Formally, this means that if $P$ is a subpath of $Q$, then the algorithm
would attempt to route on $P$ before routing on $Q$ (assuming both paths are associated with a commodity). 
It is helpful to summarize  properties of 
such nested flows in path supply graphs.
\begin{lemma}
\label{lem:nested-down}
If $f_{k\ell} > 0$, then $f_{ij}=d_{ij}$, for all $[i,j] \subset [k,\ell] $
\end{lemma}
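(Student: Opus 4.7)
The plan is to prove the contrapositive: if there is a proper subpath $[i,j] \subset [k,\ell]$ with $f_{ij} < d_{ij}$, then $f_{k\ell} = 0$. Two structural invariants of {\sc incorporate} on a path supply graph drive the argument. First, the incorporated set $S$ is always a connected subpath, because the algorithm starts with $\{r\}$ and only adds neighbors. Second, residual capacities are monotone nonincreasing throughout the execution, so once a node becomes tight it remains tight.

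Next I would locate when commodity $ij$ is processed relative to commodity $k\ell$. Without loss of generality $\ell$ is incorporated after $k$, so $k\ell$ is handled at the iteration when $\ell$ is added. Because $S \cup \{\ell\}$ is a connected subpath containing both $k$ and $\ell$, every node strictly between them already lies in $S$ at that iteration. For a strict subpath $[i,j] \subset [k,\ell]$ this yields two situations: either (a) $j < \ell$, in which case both $i$ and $j$ are in $S$ when $\ell$ is added and so commodity $ij$ was handled at a strictly earlier iteration; or (b) $j = \ell$ and $k < i$, in which case $i\ell$ is handled at the same outer iteration as $k\ell$ but is popped off the stack first, because the algorithm processes the elements of $S$ in order of increasing distance from $\ell$. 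The case where $k$ is incorporated after $\ell$ is entirely symmetric.

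The tight-node argument now closes both situations uniformly. The assumption $f_{ij} < d_{ij}$ forces the {\sc Route} subroutine on $P_{ij}$ to set $f_{ij}$ equal to the minimum residual capacity on $P_{ij}$, and after the decrement some node $v^\star \in P_{ij}$ has zero residual capacity. Because $P_{ij} \subset P_{k\ell}$ and $v^\star$ remains tight for the rest of the execution, the minimum residual capacity on $P_{k\ell}$ is zero by the time $k\ell$ is processed, so $f_{k\ell}=0$, the desired contradiction.

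The main point to nail down carefully is the within-iteration ordering asserted in case (b): one must confirm that the stack built by {\sc incorporate} actually causes commodity $i\ell$ to be routed before $k\ell$ whenever $k < i < \ell$, i.e., that popping $T$ yields the elements of $S$ in increasing order of distance from $v$ (this is exactly the ``nested ordering'' the paper invokes). Once this nested-ordering claim is verified, the rest of the proof is just monotonicity of residual capacities together with the connectivity invariant for $S$.
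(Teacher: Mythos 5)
Your proof is correct and is essentially the paper's argument in contrapositive form: both rest on the nested processing order (a subpath commodity is routed before any commodity spanning it) together with monotonicity of residual capacities, so that a partially routed $ij$ leaves a permanently tight node inside $[k,\ell]$ and forces $f_{k\ell}=0$. Your extra case analysis of when $ij$ is processed (earlier iteration vs.\ earlier pop within the same iteration) is just an explicit verification of the nested-ordering property the paper records immediately before the lemma and then invokes without proof.
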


\begin{proof}
If $f_{k\ell}>0$, then prior to this commodity being routed, $C_v>0 \quad  \forall v \in [k,\ell]$. Because all commodities $ij$ such that $[i,j] \subset [k,\ell]$ are processed prior to $f_{k\ell}$, we must have had $C_v>0$ for every $v \in [i,j]$ when $ij$ is routed. 
\end{proof}

\begin{lemma}
\label{lem:nested-up}
If $f_{k\ell} < d_{k\ell}$, then $f_{ij}=0$,  for all  $[i,j] \supset  [k,\ell]$
\end{lemma}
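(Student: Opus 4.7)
The plan is to exploit the fact that the shortfall $f_{k\ell} < d_{k\ell}$ forces the {\sc route} subroutine to have been capacity-limited rather than demand-limited when commodity $k\ell$ was processed. Writing $f_{k\ell} = \min\{d_{k\ell}, m\}$ for $m = \min_{v \in P_{k\ell}} C_v$ (residual capacities at that moment), the strict inequality forces $f_{k\ell} = m < d_{k\ell}$. Consequently some node $v^\ast \in P_{k\ell}$ realizes this minimum and, after its capacity is decremented by $m$, becomes tight: $C_{v^\ast} = 0$.

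Next I would invoke the nested ordering guaranteed by {\sc incorporate}. If $[i,j] \supset [k,\ell]$ and $ij \in E(H)$ (otherwise $f_{ij}=0$ trivially), then $P_{k\ell}$ is a proper subpath of $P_{ij}$, so $v^\ast \in P_{ij}$. The key point is that $ij$ is scanned strictly after $k\ell$ by the algorithm. Since residual capacities are monotone non-increasing throughout execution, at the moment $ij$ is processed $v^\ast$ still satisfies $C_{v^\ast}=0$. Hence the minimum residual capacity on $P_{ij}$ is zero, and {\sc route} assigns $f_{ij} = \min\{d_{ij},0\}=0$.

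The only step requiring real verification is that {\sc incorporate} really does process $k\ell$ before $ij$ whenever $[k,\ell] \subset [i,j]$; this is where I expect the main (but mild) obstacle. Both commodities are routed at the moment their latter endpoint is incorporated into $S$. Write $[i,j] \supset [k,\ell]$, so $i \le k < \ell \le j$ with at least one inequality strict. If $j > \ell$, then at the step in which $j$ is incorporated both endpoints of $k\ell$ already lie in $S$, so $k\ell$ was routed at an earlier outer iteration. If instead $j=\ell$ (forcing $i<k$), then $k\ell$ and $ij$ are routed in the \emph{same} outer iteration (the one that incorporates $v=j=\ell$); the inner loop pops the stack $T$ of already-incorporated nodes in increasing order of distance from $v$, so $k$ (closer to $v=\ell$) is handled before $i$. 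In both cases $k\ell$ is processed before $ij$, and the argument of the previous paragraph completes the proof.
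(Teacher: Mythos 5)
Your first two paragraphs are exactly the paper's argument: the shortfall forces {\sc route} to be capacity-limited, so some node of $P_{k\ell}$ becomes tight when $k\ell$ is processed; residual capacities never increase; and since every commodity $ij$ with $[i,j]\supset[k,\ell]$ is processed later, the minimum residual capacity on $P_{ij}$ is $0$ at that point and {\sc route} sets $f_{ij}=0$. The paper's proof is precisely this, with the processing-order property (the ``nested ordering'' of {\sc incorporate}) taken as given rather than re-derived.

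The one flaw is in your verification of that ordering property. Your case split ``$j>\ell$'' versus ``$j=\ell$'' tacitly assumes that $j$ is the endpoint of $ij$ that is incorporated last. Because {\sc incorporate} may start at any node and grow the interval $S$ at either end, this can fail: if the starting node lies in $(k,\ell]$, say, and the interval first grows rightward, then at the step in which $j$ is incorporated $S$ need contain neither $i$ nor $k$, so your claim that ``both endpoints of $k\ell$ already lie in $S$'' at that step is false (and indeed $ij$ is not routed at that step at all, but later, when $i$ is incorporated); similarly, when $j=\ell$ the two commodities need not be routed in the same outer iteration. The ordering statement is still true, and the correct argument is: $ij$ is routed at the step in which the later-incorporated of its endpoints, say $v\in\{i,j\}$, enters $S$; at that step $S+v$ is an interval containing $i$ and $j$, hence containing $k$ and $\ell$. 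If both $k$ and $\ell$ entered $S$ at strictly earlier steps, then $k\ell$ was routed in an earlier outer iteration. Otherwise one of them equals $v$, which forces $v=i=k$ or $v=j=\ell$; then both commodities are routed in the same outer iteration, and since the other endpoint of $k\ell$ is strictly closer to $v$ than the other endpoint of $ij$, the stack pops it first, so $k\ell$ is routed before $ij$. With this repair your proof is complete and coincides with the paper's.
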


\begin{proof}
If $f_{k\ell} < d_{k\ell}$, then there is a tight node $x \in [k,\ell]$
after $k\ell$ is processed. Since any  commodity $[i,j] \supset [k,\ell]$ is processed later,  $x$ will block $ij$ from routing any flow.
\end{proof}

\section{Certifying that $S$ is not a Breakaway Set in General Graphs} 
\label{sec:certificate}

Let $\hat{f}$ be some feasible flow with payoff $\util$. How can we check if $\util$ is  in the core? Let us consider a more restricted question. For  a
given set $S$,  can we verify that it has no deviation? That is,
can we certify that there is no flow $f \in \mathcal{F}^S$ such that $\util^S \big(f\big) > \util^S \big(\hat{f} \big)$?  The existence of such a deviation corresponds to   feasibility of the following strict-inequality linear program.

\begin{equation}
\label{eqn:LP}
\begin{array}{ll@{}ll}
\text{maximize}  & \displaystyle 0^Tf &\\
\text{subject to}& \displaystyle\sum\limits_{P \ni v}   &f_P \leq  c_v     &v \in S \\
                 & \displaystyle\sum\limits_{P \in \mathcal{P}_{k\ell}[S]}  &f_P \le d_{k\ell} &k\ell \in H[S] \\
                 & &\util_v \big(f\big) > \util_v \big(\hat{f}\big) & v \in S \\
                 & &f_P \ge 0, ~~P \in \mathcal{P}[S]
\end{array}
\end{equation}

\noindent
Here we use $H[S]$ to denote the subgraph of $H$ induced by $S$
and $\mathcal{P}[S], \mathcal{P}_{k\ell}[S]$ denote simple paths in $G[S]$. 

To show that $S$ does not have a deviation $f$, we must show that the above linear program is infeasible.
By Farkas' Lemma, a linear system $Ax \le b, x \ge 0$ is infeasible if and only if $\exists y \ge 0$ such that $y^TA \geq 0$ and $y^Tb<0$. This would immediately give a way to show that a coalition $S$ is not a breakaway, but for the fact that (\ref{eqn:LP}) has  strict inequalities. Hence, for some $\epsilon > 0$, consider a standard LP, call it $LP(\epsilon)$, obtained by modifying the {\em payoff constraints} to be: $-\util_x\big(f\big) \leq - \util_x\big(\hat{f}\big) - \epsilon$. We may now apply Farkas' Lemma to this new program to  see that it is infeasible if and only if there is a solution $(y_x: x \in S),(z_{k\ell}: k\ell \in E(H)),(w_x: x \in S)$ to the following system.

\begin{eqnarray}\label{eqn:linecertificate}
\label{eqn:total-eps}
   \displaystyle \sum_{v \in S} y_v c_v + \sum_{k\ell 
 \in H[S]} z_{k\ell} d_{k\ell} &<& \sum_{v \in S} w_v (\util_v \big(\hat{f} \big)+\epsilon)   \\
 \label{eqn:demands}
 \sum_{v \in V(P)} y_v + z_{k\ell} &\ge& w_k + w_{\ell} \quad    ~~~~\forall  \mbox{$P \in \mathcal{P}_{k\ell}[S], ~k\ell \in H[S]$ }  \\
 \label{eqn:noneg}
 y,z,w  &\geq& 0
\end{eqnarray}

\noindent where we use that if $P \in \mathcal{P}_{k\ell}$, then $f_P$ only contributes to the payoffs $\util_k$ and $\util_{\ell}$.

These facts are used  to prove the following.

\begin{lemma}
\label{lem:certificate}
$S$ does not have a deviation if and only if there exists $y,z,w \geq 0$ which satisfy $(3,4)$ and
\begin{equation}
\label{eqn:total}
 \quad \displaystyle \sum_{v \in S} y_v c_v + \sum_{k\ell 
 \in E(H[S])} z_{k\ell} d_{k\ell}  \leq \sum_{v \in S} w_v \util_v \big(\hat{f}\big) 
\end{equation}
 
\noindent
and  such that $w \neq 0$.
\end{lemma}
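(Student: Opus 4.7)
The plan is to prove each direction of the equivalence separately; the hardest part will be converting the strict-inequality feasibility question for (\ref{eqn:LP}) into a weak-inequality certificate of the form (\ref{eqn:total}), which I intend to handle via LP duality rather than by taking $\epsilon \to 0$ in the Farkas certificates for $LP(\epsilon)$.

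\emph{Sufficiency.} Suppose $y, z, w \geq 0$ with $w \neq 0$ satisfying (\ref{eqn:demands}) and (\ref{eqn:total}) exist, and assume for contradiction that $f \in \mathcal{F}^S$ is a deviation, so $\util_v(f) > \util_v(\hat f)$ for every $v \in S$. I would multiply (\ref{eqn:demands}) by $f_P \geq 0$ and sum over $P \in \mathcal{P}[S]$. Swapping the order of summation on the left and invoking the capacity and demand constraints of $f$ gives the upper bound $\sum_v y_v c_v + \sum_{k\ell} z_{k\ell} d_{k\ell}$. The right-hand side collapses to $\sum_{v \in S} w_v \util_v(f)$, since each $P \in \mathcal{P}_{k\ell}[S]$ contributes $f_P$ to exactly $\util_k(f)$ and $\util_\ell(f)$. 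Because $w \geq 0$ and $w \neq 0$ while $\util_v(f) > \util_v(\hat f)$ for all $v \in S$, the inequality $\sum_v w_v \util_v(f) > \sum_v w_v \util_v(\hat f)$ is strict, contradicting (\ref{eqn:total}).

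\emph{Necessity.} I would consider the auxiliary LP
\begin{equation*}
\max\ \alpha \quad\text{s.t.}\quad \textstyle\sum_{P \ni v} f_P \leq c_v,\quad \sum_{P \in \mathcal{P}_{k\ell}[S]} f_P \leq d_{k\ell},\quad \util_v(f) \geq \util_v(\hat f) + \alpha,\quad f \geq 0,\ \alpha \in \mathbb{R}.
\end{equation*}
It is feasible (take $f=0$ and $\alpha$ sufficiently negative) and bounded (capacities cap $\util$), so strong duality applies. Observe that ``$S$ has no deviation'' is equivalent to $\alpha^\ast \leq 0$, since a deviation yields $\alpha = \min_{v\in S}(\util_v(f) - \util_v(\hat f)) > 0$, and any primal-feasible point with $\alpha > 0$ is itself a deviation.

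Taking the dual with nonnegative multipliers $y, z, w$ for the three constraint groups, the constraint dual to $f_P$ is exactly (\ref{eqn:demands}), and the constraint from the free variable $\alpha$ yields $\sum_v w_v = 1$; the dual objective is $\sum_v y_v c_v + \sum_{k\ell} z_{k\ell} d_{k\ell} - \sum_v w_v \util_v(\hat f)$. Strong duality then equates the dual minimum with $\alpha^\ast \leq 0$, producing $y, z, w \geq 0$ with $\sum_v w_v = 1$ (in particular $w \neq 0$) that satisfy (\ref{eqn:demands}) and (\ref{eqn:total}). The main subtlety I foresee is ensuring $w \neq 0$: this is delivered automatically by the $\alpha$-normalization $\sum_v w_v = 1$, and conversely any certificate with $w \neq 0$ rescales to one with $\sum_v w_v = 1$ without affecting (\ref{eqn:demands}) or (\ref{eqn:total}), so the two formulations of the lemma are equivalent.
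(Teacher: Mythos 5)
Your proposal is correct, but it reaches the lemma by a genuinely different route than the paper. The paper works entirely with Farkas certificates for the perturbed system $LP(\epsilon)$: for sufficiency it observes that a certificate with $w\neq 0$ satisfying (\ref{eqn:demands}) and (\ref{eqn:total}) makes the strict inequality (\ref{eqn:total-eps}) hold for every $\epsilon>0$, and for necessity it takes the Farkas certificate witnessing infeasibility of $LP(1)$ and argues only that $w\neq 0$ (the passage from that certificate, which carries the $+\epsilon$ slack on the right-hand side, to the $\epsilon$-free inequality (\ref{eqn:total}) is left implicit). You instead prove sufficiency by a self-contained weak-duality computation (multiplying (\ref{eqn:demands}) by $f_P$ and summing, then using the capacity and demand constraints), and necessity by strong duality applied to the auxiliary ``maximize the minimum improvement $\alpha$'' LP, whose dual constraint for the free variable $\alpha$ forces $\sum_{v\in S} w_v=1$ and whose optimal value $\alpha^\ast\le 0$ is literally (\ref{eqn:total}). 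What your route buys is exactly the step the paper treats tersely: the $\epsilon$-free weak inequality and the normalization $w\neq 0$ both drop out of the duality automatically, with no limiting or perturbation argument; what the paper's route buys is that the certificate is phrased directly as infeasibility of the deviation system (\ref{eqn:LP}), the form reused throughout Sections~\ref{sec:certificate} and \ref{sec:pathcert}, though the certificates produced by either argument are the same objects. Two small points to tidy: boundedness of your auxiliary LP is most cleanly justified by the demand constraints ($\util_v(f)\le\sum_u d_{uv}$), which works even when capacities are effectively unbounded, and you should say explicitly that $S\neq\emptyset$ so that the constraint $\sum_{v\in S}w_v=1$ is non-vacuous.
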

\begin{proof}
Note that $S$ does not have a deviation if and only if  $LP(\epsilon)$ is infeasible for any $\epsilon > 0$. First suppose that
 $y,z,w$  satisfy the prescribed conditions. Since $w_x > 0$ for some $x$, we then
 have that the strict inequality holds (\ref{eqn:total-eps}), for any choice of $\epsilon > 0$. That is, $LP(\epsilon)$ is infeasible for any $\epsilon > 0$. 
 Conversely, if $S$ does not have a deviation, then $LP(1)$ is infeasible and hence there is an associated dual certificate $y,z,w$. Since the left-hand side of (\ref{eqn:total-eps}) is non-negative, and $\util\big(\hat{f}\big) \geq 0$, we must have that $w_x >0$ for some $x \in S$.
\end{proof}

We call a feasible solution $y,z,w \neq 0$ to (\ref{eqn:demands}-\ref{eqn:total})  a (Farkas) {\em certificate} for a coalition $S$ as it certifies that $S$ is not a breakaway set. We call such a coalition {\em certifiable}. 
To establish that some vector is in the core, we must argue that any proper subset $S$ is certifiable. Later, we also need the following fact.

\begin{lemma}
\label{lem:up-closed}
If a coalition $S$ is certifiable for a given payoff $\util\big(\hat{f}\big)$, then $S$ is certifiable for any payoff arising from a flow $f \in \mathcal{F}^V$ such that $f \ge \hat{f}$.
\end{lemma}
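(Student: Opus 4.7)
The plan is to reuse the very same Farkas certificate $(y,z,w)$ that witnesses $S$ is not a breakaway under $\util(\hat f)$, and simply observe that the defining inequalities remain valid when the payoffs are increased on the right-hand side of (\ref{eqn:total}). Nothing on the left-hand side of the certificate depends on the flow at all, so the only thing to check is a monotonicity statement.

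Concretely, I would first record the only fact about the payoff map we need: for every node $v \in S$,
\[
\util_v(f) \;=\; \sum_{uv \in E(H)} f_{uv} \;\ge\; \sum_{uv \in E(H)} \hat f_{uv} \;=\; \util_v(\hat f),
\]
since $f \ge \hat f$ coordinate-wise on path flows and $\util_v$ is a non-negative linear combination of path-flow entries. Next, I would let $(y,z,w)$ be the certificate guaranteed by hypothesis, so it satisfies (\ref{eqn:demands}) and (\ref{eqn:total}), is non-negative, and has $w \neq 0$. Constraints (\ref{eqn:demands}) depend only on $G[S]$, $H[S]$ and on $(y,z,w)$, so they continue to hold verbatim. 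For (\ref{eqn:total}), using $w \ge 0$ together with the pointwise inequality $\util_v(f) \ge \util_v(\hat f)$,
\[
\sum_{v \in S} y_v c_v + \sum_{k\ell \in E(H[S])} z_{k\ell} d_{k\ell} \;\le\; \sum_{v \in S} w_v \util_v(\hat f) \;\le\; \sum_{v \in S} w_v \util_v(f).
\]
Thus $(y,z,w)$ also satisfies (\ref{eqn:total}) with respect to the new payoff, and $w \neq 0$ is unchanged. By Lemma~\ref{lem:certificate} this exhibits $S$ as certifiable for $\util(f)$.

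There is no real obstacle here; the statement is essentially a consequence of the fact that the set of Farkas certificates is upward-closed in the payoff vector, and the payoff map $f \mapsto \util(f)$ is monotone in $f$. The only minor care needed is to confirm that $w \neq 0$ survives (it does, since the certificate is carried over unchanged) and that the dependence of the dual system on $\hat f$ sits only on the right-hand side of a single inequality whose coefficient vector $w$ is non-negative.
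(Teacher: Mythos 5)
Your proposal is correct and is essentially the paper's own argument: payoffs are monotone in the flow, the payoff vector appears only on the right-hand side of (\ref{eqn:total}) with non-negative coefficients $w$, so the same certificate $(y,z,w)$ carries over unchanged. Your write-up merely spells out the monotonicity and the preservation of $w \neq 0$ explicitly, which the paper leaves implicit.
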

\begin{proof}

Clearly payoffs are monotonic with regard to flow, and the payoffs only occur in constraint (\ref{eqn:total}). As increasing the payoff can only improve this constraint, $S$ is still certifiable under $f$. 
\end{proof}

\subsection{Implied Values and Trivial Cases}
\label{sec:implied}

We continue to work on the case for general supply graphs $G$.
For the remainder of this section we assume that we are given a payoff vector $\util$ (and a flow $f$ which induces it) and a proper subset $S$ and our goal is to certify that $S$ is not a breakaway set. We first restrict attention to certificates with a simpler structure.  We then consider several cases where it is easy to certify $S$.

Given  vectors $y,w \geq 0$ we can easily determine if there is a $z$ for which $y,z,w$ is a certificate. This is because 
 the ``best'' choice of $z$ is for each $k\ell \in H[S]$, to set
 
 \begin{equation}
 \label{eqn:implied}
 z_{k\ell}:=Z_{k\ell}(y,w)=\max_{P \in \mathcal{P}_{k\ell}}\Big\{w_k+w_{\ell}-\sum_{v\in V(P)} y_v,0 \Big\}.
 \end{equation}
 
We refer to $Z_{k\ell}(y,w)$ as the value {\em implied} by $y,w$. This is the best value since  $z_{k\ell}$ only occurs in 2 constraints: in (\ref{eqn:demands}) where it needs to be at least the implied value, and (\ref{eqn:total}) where making it smaller improves the inequality.

\begin{lemma}
\label{lem:standard-z}
If $y,z,w$ is a certificate, then so is $y,z',w$  where $z'$ is set to the values implied by $y,w$.
\end{lemma}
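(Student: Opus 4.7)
The plan is a short verification that the four conditions for being a certificate are preserved when $z$ is replaced by the implied vector $z'$, where $z'_{k\ell} := Z_{k\ell}(y,w)$.

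First I would observe that non-negativity of $z'$ is immediate from the definition, which takes a max with $0$, so together with the unchanged vectors $y,w \ge 0$ condition~(\ref{eqn:noneg}) is preserved, and $w \neq 0$ trivially still holds.

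Next I would check the path constraints~(\ref{eqn:demands}). Fix any $k\ell \in H[S]$ and any $P \in \mathcal{P}_{k\ell}[S]$. By the definition of $z'_{k\ell}$ as a maximum over all paths in $\mathcal{P}_{k\ell}$ (in particular over $P$), we have
\[
z'_{k\ell} \;\geq\; w_k + w_\ell - \sum_{v \in V(P)} y_v,
\]
which rearranges to exactly~(\ref{eqn:demands}) for $(y,z',w)$.

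The only remaining step — and the one that actually uses the hypothesis that $(y,z,w)$ was a certificate — is to verify~(\ref{eqn:total}) for $(y,z',w)$. For this I would argue $z'_{k\ell} \leq z_{k\ell}$ for every $k\ell \in H[S]$: since $(y,z,w)$ satisfies~(\ref{eqn:demands}), we have $z_{k\ell} \geq w_k + w_\ell - \sum_{v \in V(P)} y_v$ for every $P \in \mathcal{P}_{k\ell}[S]$, and $z_{k\ell} \ge 0$ by~(\ref{eqn:noneg}); taking the max over $P$ gives $z_{k\ell} \geq z'_{k\ell}$. Because demands $d_{k\ell}$ are non-negative, replacing $z$ by $z'$ only decreases the left-hand side of~(\ref{eqn:total}), so the inequality is preserved. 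There is no real obstacle here; the lemma is essentially a bookkeeping statement that among all certificates sharing the same $(y,w)$, the implied choice of $z$ is always feasible and optimal.
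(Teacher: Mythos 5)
Your proof is correct and follows exactly the reasoning the paper gives (the paper states this lemma without a formal proof, justifying it only by the remark that $z_{k\ell}$ appears in just the two constraints (\ref{eqn:demands}) and (\ref{eqn:total})): the implied value is the pointwise-smallest choice satisfying (\ref{eqn:demands}), any certificate's $z$ dominates it, and lowering $z$ only helps (\ref{eqn:total}). Your write-up is simply a fleshed-out version of that same bookkeeping argument, so there is nothing to add beyond noting that the maximum in (\ref{eqn:implied}) should be read over paths in $G[S]$, consistent with constraint (\ref{eqn:demands}).
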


\noindent
From  now  on, we assume  that $z$ is set to the implied values. Hence we  only need to check condition (\ref{eqn:total}). 

We will only consider {\em binary  certificates}, that is,
 the vectors $y,w$ are binary.  We always work with certificates where  the support of $y$, denoted by $Y$, are tight nodes. Moreover, we always select certificates so that $Y \subseteq W$, where $W$ is the support of $w$. 
 
 We now catalogue a few cases where certificates  are trivial to construct.

We call a node $v$ {\em globally content} if $\util_v=c_v$. We hardly need a certificate to show that no feasible flow (for any coalition $S$!) could yield strictly greater utility for $v$.
Nevertheless we construct a certificate as a warm-up for later results.

\begin{lemma}
If there exists a globally content player $v \in S$, then $S$ is certifiable.
\end{lemma}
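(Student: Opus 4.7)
The plan is to construct an explicit binary certificate supported on the single globally content node $v$. Take $y = w = \mathbf{1}_{\{v\}}$ and let $z$ be the implied values from equation (\ref{eqn:implied}). This choice satisfies $Y \subseteq W$ trivially, and it fits the framework of binary certificates because $v$ is indeed tight: since any path terminating at $v$ passes through $v$, we have $\sum_{P \ni v} \hat f_P \ge \util_v(\hat f) = c_v$, forcing equality with $c_v$.

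Next I would compute $z$ explicitly and show every entry is zero. For a commodity $k\ell \in H[S]$, the implied value is $z_{k\ell} = \max_{P \in \mathcal{P}_{k\ell}}\{w_k + w_\ell - \sum_{u \in V(P)} y_u, 0\}$. If $v \notin \{k,\ell\}$, then $w_k + w_\ell = 0$, so $z_{k\ell} = 0$. If $v \in \{k,\ell\}$, say $v = k$, then $w_k + w_\ell = 1$, but every path $P \in \mathcal{P}_{k\ell}$ contains $v$, so $\sum_{u \in V(P)} y_u \ge 1$, giving $z_{k\ell} = 0$ as well.

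With $z \equiv 0$, condition (\ref{eqn:total}) reduces to $y_v c_v \le w_v \util_v(\hat f)$, i.e.\ $c_v \le \util_v(\hat f)$, which holds with equality by the global contentedness of $v$. Since $w \ne 0$, this gives a valid certificate by Lemma~\ref{lem:certificate}.

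There is no real obstacle here; the lemma is intended as a warm-up that illustrates the binary-certificate template (choose $Y \subseteq W$ tight, set $z$ to its implied value, verify (\ref{eqn:total})) that more substantial later arguments will follow.
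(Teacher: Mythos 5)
Your proof is correct and follows essentially the same route as the paper: the certificate $Y=W=\{v\}$ with the implied values $z=0$ and the verification $yc+zd=c_v=\util_v=w\util$ is exactly the paper's argument. The only addition is your explicit check that $v$ is tight, which is a harmless (and accurate) extra detail.
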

\begin{proof}
We construct the certificate as $\{v\} = Y = W$. Note that the right hand side of (\ref{eqn:demands}) is only non-zero for demands
which terminate at $v$. For such demands $y_v=1$ already handles this constraint. Then the implied values are $z=0$.\\

\noindent
Now for condition (\ref{eqn:total}) we have 

\begin{align*}
    yc + zd &= c_v \\
            &= \util_v \\
            &= w\util
\end{align*}

\end{proof}

\noindent
We now assume there are no globally content players and hence:
\begin{assumption}
\label{ass:global-content}
Every tight player in $S$ transits some flow. 
\end{assumption}

If there exists $v \in S$ such that $\pi_v \ge \sum_{u \in S} d_{uv}$, then no feasible flow from $\mathcal{F}^S$ can route more flow to $v$. We say that such a player $v$ is {\em $S$-content}.\\

\begin{lemma}
\label{lem:s-cont}
If there exists an {\em $S$-content} player, then $S$ is certifiable
\end{lemma}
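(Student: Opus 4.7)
The plan is to exhibit an explicit binary certificate with support sets $Y = \emptyset$ and $W = \{v\}$, where $v$ is the $S$-content player. Concretely, I set $w$ to be the indicator vector of $v$ (so $w_v = 1$ and $w_u = 0$ for $u \neq v$) and take $y \equiv 0$. Since $Y \subseteq W$ trivially and the support of $y$ is empty (so the convention that $Y$ consists of tight nodes is vacuous), this is a legal candidate.

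Next I compute the implied values via (\ref{eqn:implied}). Because $y = 0$, for every $k\ell \in H[S]$ and every $P \in \mathcal{P}_{k\ell}[S]$ the expression $w_k + w_\ell - \sum_{u \in V(P)} y_u$ equals $w_k + w_\ell \ge 0$. Thus $z_{k\ell} = w_k + w_\ell$, which evaluates to $1$ exactly when $v \in \{k,\ell\}$ and to $0$ otherwise. Constraint (\ref{eqn:demands}) then holds (with equality) by construction, and non-negativity (\ref{eqn:noneg}) is immediate.

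The only remaining task is to verify the budget constraint (\ref{eqn:total}). With the choices above, its left-hand side is
\[
\sum_{u \in S} y_u c_u + \sum_{k\ell \in H[S]} z_{k\ell}\, d_{k\ell} \;=\; 0 \;+\; \sum_{u \in S : uv \in E(H)} d_{uv},
\]
while its right-hand side is $\sum_{u \in S} w_u \util_u = \util_v$. So (\ref{eqn:total}) reduces to precisely the inequality $\util_v \ge \sum_{u \in S} d_{uv}$ defining $S$-content. Since $w_v = 1 \neq 0$, Lemma~\ref{lem:certificate} then certifies that $S$ is not a breakaway set.

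I do not expect any real obstacle here: the definition of $S$-content is tailored so that the single-vertex weighting $w = \mathbf{1}_v$ by itself balances all demands incident to $v$ against its current payoff, without any contribution from node capacities. The slight subtlety worth remarking on is that we do not need Assumption~\ref{ass:global-content} in this step, because $Y$ is empty and the certificate places no requirement on tight nodes.
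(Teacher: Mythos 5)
Your proposal is correct and matches the paper's own proof: both take $w=\mathbbm{1}_{\{v\}}$, $y=0$, let $z$ be the implied values (so $z_{kv}=1$ exactly for commodities incident to $v$ in $H[S]$), and then check (\ref{eqn:total}) with $w\neq 0$. The only cosmetic difference is that you verify (\ref{eqn:total}) directly from the defining inequality $\util_v \ge \sum_{u\in S} d_{uv}$ of $S$-contentness, whereas the paper passes through the observation that the commodities of $H[S]$ incident to $v$ are fully-routed; your shortcut is if anything cleaner.
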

\begin{proof}
We set $w_v=1$ and $y=0$. Hence the implied values are $z_{kv}=1$ for all commodities $kv \in H[S]$.\\

\noindent
Again we check condition (\ref{eqn:total}).

\begin{align*}
    yc + zd &= zd \\
            &= \sum_{kv \in H[S]} d_{kv} \\
            &= \sum_{kv \in H[S]} f_{kv} \\
            &\le \sum_{k \neq v} f_{kv} \\
            &= \util_v \\
            &= w\util
\end{align*}

Where the second equality follows from $z$ only selecting commodities incident $v$, and the third equality follows as every commodity (from $H[S]$) incident $v$ is fully-routed ({\em i.e.} $d_{kv}=f_{kv}$).

\end{proof}

\noindent
Henceforth we assume the following. 
\begin{assumption}
\label{ass:s-content}
Every player in $S$ must be incident to a commodity within $H[S]$ that is not fully-routed.  
\end{assumption}

Note that if there are no tight nodes in $S$, then every player in $S$ is $S$-content. Hence we assume the following.

\begin{assumption}
\label{ass:tight}
The coalition $S$ contains a tight node.
\end{assumption}

\subsection{Pre-Certificates}

We continue to work in general supply graphs. In this section we prove a key lemma for building
 certificates  based on two sets of nodes $Y \subseteq W \subseteq S$, where $Y$ is a set of tight nodes. 
We call this pair a {\em pre-certificate} if the vectors $y=\mathbbm{1}_Y,w=\mathbbm{1}_W$ can be extended to a certificate by setting a vector $z$ to the implied values (\ref{eqn:implied}).

We say   the variables $y,z,w$  {\em select} some node capacities, demands, and node utilities respectively. We then think of the selected demands and capacities as costs that need to be charged against the selected utilities, {\em i.e.}, to verify the  condition (\ref{eqn:total}) that $w\util- (yc+zd) \ge 0$. In other words, a {\em charging scheme} consists of paying for the selected capacities $yc$ and demands $zd$ by using the payoffs from the selected players $w \util$. \\  

We  now develop   sufficient conditions for sets $Y,W$ to be a pre-certificate.

\begin{lemma}[Pre-Certificate Lemma]
\label{lem:pseudo1}
Consider a pair of node sets $Y \subseteq W \subseteq S$ where nodes in $Y$ are tight.  This forms a pre-certificate if the following properties hold:

\begin{enumerate}
    \item[\textbf{P1)}] every commodity with both endpoints in $W$ is fully-routed. 
    \item[\textbf{P2)}] every positive flow path that transits a player $v \in Y$ has an endpoint in $W$
    \item[\textbf{P3)}] if $ab$ is a  non-fully-routed commodity with an endpoint in $W$, then $G \setminus Y$ does not contain an $ab$-path 
    \item[\textbf{P4)}] no positive flow path touches more than one node in $Y$.
\end{enumerate} 
\end{lemma}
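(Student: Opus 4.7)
The plan is to show that $y = \mathbbm{1}_Y$, $w = \mathbbm{1}_W$, and $z$ set to the implied values of (\ref{eqn:implied}) together form a Farkas certificate for $S$. The demand constraints (\ref{eqn:demands}) hold by the definition of implied values, and $w \neq 0$ as long as $W$ is nonempty. The substance of the argument is the budget inequality (\ref{eqn:total}): $\sum_{v \in Y} c_v + \sum_{k\ell \in H[S]} z_{k\ell} d_{k\ell} \leq \sum_{v \in W} \util_v$.

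I would first rewrite both extremes as sums over positive flow paths. Tightness of every $v \in Y$ gives $\sum_{v \in Y} c_v = \sum_{P:f_P>0} f_P \cdot |V(P) \cap Y|$, while the payoff definition gives $\sum_{v \in W} \util_v = \sum_{P:f_P>0} f_P \cdot |\{k(P),\ell(P)\} \cap W|$, where $k(P)$ and $\ell(P)$ are the endpoints of $P$. Let $\phi(P) = f_P \cdot (|\{k(P),\ell(P)\} \cap W| - |V(P) \cap Y|)$ be the per-path surplus. A case split shows $\phi(P) \geq 0$: by \textbf{P4}, $|V(P) \cap Y| \leq 1$; when $V(P) \cap Y = \{v\}$ with $v$ internal to $P$, \textbf{P2} puts an endpoint of $P$ into $W$, and when $v$ is itself an endpoint of $P$, the containment $Y \subseteq W$ puts that endpoint into $W$. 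The same reasoning shows positive flow paths for commodities with an endpoint outside $S$ still contribute nonnegatively to $\sum_P \phi(P)$. So verifying (\ref{eqn:total}) reduces to proving $\sum_{k\ell \in H[S]} z_{k\ell} d_{k\ell} \leq \sum_P \phi(P)$.

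I would then charge $z_{k\ell} d_{k\ell}$ to the surplus from flow paths for $k\ell$, one commodity at a time. Writing $z_{k\ell} = (w_k + w_\ell - \mu_{k\ell})^+$ with $\mu_{k\ell} = \min_{P \in \mathcal{P}_{k\ell}[S]} |V(P) \cap Y|$, any $k\ell$ with $z_{k\ell} > 0$ must be fully routed: if both $k, \ell \in W$ this is \textbf{P1}, and if only one of $k, \ell$ lies in $W$ while $\mu_{k\ell} = 0$, the $\mu$-minimizer is a $k\ell$-path in $G \setminus Y$, so the contrapositive of \textbf{P3} applies. For fully-routed $k\ell$ the aggregate surplus from its flow paths equals $(w_k + w_\ell) d_{k\ell} - F^Y_{k\ell}$ with $F^Y_{k\ell} := \sum_{P \in \mathcal{P}_{k\ell}} f_P |V(P) \cap Y|$, so the per-commodity bound reduces to $F^Y_{k\ell} \leq \mu_{k\ell} d_{k\ell}$. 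The hardest step is this last inequality; in the path supply graphs of Section~\ref{sec:pathcert} the $k\ell$-path is unique, so $|V(P) \cap Y| = \mu_{k\ell}$ on every positive flow path for $k\ell$ and the bound holds with equality, which is precisely the structural fact that makes the Pre-Certificate Lemma usable downstream.
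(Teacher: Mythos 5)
Your algebraic reductions are correct and constitute a genuinely different, cleaner route than the paper's. The paper proves (\ref{eqn:total}) by a two-stage charging scheme: it first charges each $z_{k\ell}d_{k\ell}$ to endpoint utilities in $W\setminus Y$ by cases on $z_{k\ell}\in\{1,2\}$ and $W(k,\ell)$, and then charges $y\cdot c$ against leftover ``surplus'' utilities ($s$, $\sigma$, $\sum_{v\in Y}\util_v$) using \textbf{P2} and \textbf{P4}. You instead fold $y\cdot c$ into a per-path surplus $\phi(P)\ge 0$ (tightness identity plus \textbf{P2}, \textbf{P4}, $Y\subseteq W$) and reduce the whole lemma to the single inequality $F^Y_{k\ell}\le\mu_{k\ell}d_{k\ell}$ (your notation) for each commodity with $z_{k\ell}>0$; your step that $z_{k\ell}>0$ forces full routing via \textbf{P1}/\textbf{P3} is also right.

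The gap is exactly the one you flag, and it matters: the lemma is stated for general supply graphs, while you only establish $F^Y_{k\ell}\le\mu_{k\ell}d_{k\ell}$ when the $k\ell$-path in $G[S]$ is unique. Moreover this cannot be repaired from \textbf{P1}--\textbf{P4} alone. Take $Y=\{v\}$ tight, $W=\{k,v,\ell\}$, a commodity $k\ell$ with $c_v=d_{k\ell}$ fully routed on a path through $v$, and a second $k$--$\ell$ path in $G[S]$ avoiding $v$. Then (\ref{eqn:demands}) applied to the avoiding path forces $z_{k\ell}\ge 2$ in any certificate with this $y,w$, so the left side of (\ref{eqn:total}) is at least $c_v+2d_{k\ell}=3d_{k\ell}$ while the right side is $\util_k+\util_v+\util_\ell=2d_{k\ell}$; one can pad the instance with unroutable demands hanging off $v$ so that Assumptions~\ref{ass:global-content}--\ref{ass:tight} and \textbf{P1}--\textbf{P4} all hold. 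In such an instance $S$ genuinely has a deviation (reroute $k\ell$ off $v$ and use $v$'s freed capacity), so no certificate exists at all: the general-graph form of the statement fails, and the paper's own charging argument leans on path uniqueness at the corresponding spot, namely when the capacity consumed by a transiting flow path of a fully-routed commodity with $W(k,\ell)=2$ is charged to the surplus $s$, which overlooks the sub-case $z_{k\ell}=2$ where both endpoints were already fully charged in Case 1. On a tree that sub-case cannot occur, since a flow path transiting $v\in Y$ is the unique $k\ell$-path and hence $z_{k\ell}\le 1$; this is precisely your identity $F^Y_{k\ell}=\mu_{k\ell}f_{k\ell}$. So your proof is sound exactly in the unique-path setting (paths, spiders) where Lemma~\ref{lem:pseudo} and Corollary~\ref{cor:spiders} use the Pre-Certificate Lemma, but as a proof of the lemma as stated it is incomplete, and the honest fix is to add the hypothesis you implicitly use: every positive flow path of a commodity meets $Y$ in no more nodes than some $G[S]$-path for that commodity.
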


\begin{proof}{\parindent0pt
We let 
$y=\mathbbm{1}_Y,w=\mathbbm{1}_W$.
For each $k\ell \in H[S]$, we define $W(k,\ell) := w_k+w_\ell$ and so 
the implied values for $z$  (\ref{eqn:implied})
are given by $z_{k\ell} = \max_{P \in \mathcal{P}_{k\ell}}\Big\{W(k,\ell)-y(P),0 \Big\}$;
where $y(P)=\sum_{v \in V(P)} y_v$. It follows that  $z_{k \ell} \in \{0,1,2\}$. We claim that if $z_{k\ell}>0$, then $k\ell$ is fully-routed. Since if $W(k,\ell)=2$,
this follows from   \textbf{P1}. Otherwise  $W(k,\ell)=1$ and $y(P)=0$ for some $k\ell$ path, but this violates 
\textbf{P3}, a contradiction.   
\\

It now remains to verify (\ref{eqn:total}). We do this by a charging argument. First, we charge demands selected by $z$ to utilities of certain chosen players in $W \setminus Y$. \\

Case 1) $~z_{k\ell}=2$:
\begin{quote}
Then $W(k,\ell)=2$, and in fact some  $k\ell$-path does not  touch any node of $Y$. By \textbf{P1}, $k\ell$ is fully-routed and we charge this "twice selected" commodity against both endpoints, each endpoint being charged $d_{k\ell}$.
\end{quote}

Case 2) $~z_{k\ell}=1$:

\begin{quote}
There are two subcases, either $W(k,\ell) = 1$ or $W(k,\ell)=2$.

    Case 2.1) $~z_{k\ell}=1$ and $W(k,\ell)=1$:
    \begin{quote}
    We must have that some $k\ell$-path does not touch any node in $Y$. \textbf{P3} thus ensures that any such commodity is fully-routed, and we charge the $z_{k\ell} d_{k \ell}$
    to the unique endpoint in $W$. 
    \end{quote}
    
    Case 2.2) $~z_{k\ell}=1$ and $W(k,\ell)=2$:
    \begin{quote}
        
    By definition of $z_{k\ell}$, there is some  $k\ell$-path which  touches precisely one node in $Y$. If that node is in $\{k,\ell\}$,  then we charge the demand $z_{k\ell} d_{k\ell}$ to the utility of its endpoint {\em not} in $Y$. Otherwise $k\ell$ transits a node in $Y$ (and  \textbf{P1} ensures $k\ell$ is fully-routed). In this case we arbitrarily charge one endpoint $k$ or $\ell$.
    
    \end{quote}

\end{quote}

We have now charged off $zd$ and  turn to the matter of charging the quantity $yc$ to the
remaining (uncharged) utility  from $w \util$. To this end,    notice that the demands considered in Case 2.2) contribute utility to two nodes in $W$ but we have charged its quantity $z_{k\ell}d_{k\ell}$ to one of them. Hence we can still charge to the other endpoint in $W$. A similar situation occurs for fully-routed commodities $k\ell$ for which   $z_{k\ell}=0$ if neither endpoint is in $Y$.  Then $W(k,\ell)=1$ since if $W(k,\ell)=2$,  property \textbf{P4} ensures that $z_{k\ell} >0$.  Hence we did not yet charge the utility from the selected endpoint in $W$. We denote the total such {\em surplus} utility from these two situations by $s$. \\

 Let's classify the uncharged utilities. First, we have only charged players in $W\setminus Y$ so $\util_v$ is available from each $v \in Y$. Second, since $z_{k\ell}>0$ only for fully-routed demands, we have not yet charged any utility accrued from non-fully-routed commodities. Third, we have not charged the utility generated from external demands (that is demands with one end point not in $S$) on (up to) one endpoint in $W$. Finally, as just noted  we sometimes have surplus utility $s$ associated with one of the endpoints of a fully-routed demand.
For each of these three types of excess utility we denote the total available as follows: $s$ (for fully-routed commodities as defined above), $\sigma$ (from positive but non-fully-routed commodities),  $e$ (from external commodities) and $\sum_{v \in Y} \util_v$ (from players in $Y$). Thus the total utility available to charge off $yc$ is
\begin{equation}
\label{eqn:avail}
     w \util - zd \ge s + \sigma + e + \sum_{v \in Y} \util_v.
\end{equation}
\\

We must now charge the quantity $yc$ to these surplus utilities. Since any positive commodity touches  at most one node of $Y$ (by \textbf{P4}) we consider  each node  $v \in Y$ separately when considering how to charge  for its quantity $y_vc_v=c_v$.   This quantity arises  in two ways:  from flow paths that terminate at $v$ or  flow paths that transit $v$. We charge the capacity  used by flows terminating at $v$ to $\util_v$ itself.  For the remaining capacity we have by \textbf{P2} that any such transiting flow path has at least one endpoint in $W$. First, if this is a non-fully-routed commodity, we charge its flow to the unique endpoint in $W$; hence we are charging utility which contributed to $\sigma$. Otherwise, it is a transiting flow path for a fully-routed commodity $k\ell$.   Either $z_{k\ell}=0$ or $W(k,\ell)=2$ since commodities from Case 2.1) do not transit (or even touch) any nodes in $Y$. In both cases this capacity contributed to the surplus $s$ of utility from fully-routed commodities. Hence, there is an  endpoint in $W$ which has available utility to charge.\\

Combining these observations we deduce that 

$$yc \leq s + \sigma +  \sum_{v \in Y} \util_v.$$

\noindent
This together with (\ref{eqn:avail}) yields 
\[
w \util - zd -yc \geq e \geq 0
\]

\noindent
 and hence the implied certificate $y,z,w$ satisfies (\ref{eqn:total}).\\
}\end{proof}

\section{Path Certificates and Proof of Main Theorem}
\label{sec:pathcert}

In the remainder of this section, we assume our supply graph is a path $P_n$. We also have that 
 $f$ is some output from the algorithm {\sc incorporate}.  We show how to certify a given breakaway set $S$. That is, we prove that there exists $y,z,w$ which satisfy (\ref{eqn:demands},\ref{eqn:noneg},\ref{eqn:total}). 
 Without loss of generality $S$ is of the form  $S=[i,j]$ for some $i < j$.  From now on we also make the assumptions (\ref{ass:global-content}-\ref{ass:tight}) from Section~\ref{sec:implied}.


We first eliminate the possibility of positive flow paths which span the interval $[i,j]$.
 
\begin{lemma}
\label{cor:spanpositive}
If there is a positive commodity $k\ell$ such that $S = [i,j] \subseteq [k,\ell]$, then $S$ is certifiable.
\end{lemma}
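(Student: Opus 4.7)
The plan is to reduce to Lemma~\ref{lem:s-cont} by exhibiting an $S$-content player in $S$. Since $f_{k\ell}>0$ and {\sc incorporate} produces a nested flow, Lemma~\ref{lem:nested-down} guarantees that every commodity $ab$ with $[a,b]\subsetneq[k,\ell]$ is fully routed by $f$.

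Consider first the generic case where $S$ contains some node $v\notin\{k,\ell\}$. Because $v\in S\subseteq[k,\ell]$ and $v$ is not an endpoint of $[k,\ell]$, every commodity $uv\in H[S]$ satisfies $[u,v]\subsetneq[k,\ell]$ and is therefore fully routed. Hence
\[
\util_v \;\geq\; \sum_{u:\,uv\in H[S]} f_{uv} \;=\; \sum_{u:\,uv\in H[S]} d_{uv},
\]
so $v$ is $S$-content, and Lemma~\ref{lem:s-cont} supplies the certificate.

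Otherwise $S\subseteq\{k,\ell\}$. Since $S=[i,j]$ is an interval and $|S|\geq 2$ (singletons are trivially certifiable), this forces $\ell=k+1$ and $S=\{k,k+1\}$. If the lone commodity $k\ell\in H[S]$ is fully routed, then both endpoints are $S$-content and Lemma~\ref{lem:s-cont} applies again. The only remaining subcase is $f_{k\ell}$ positive but not fully routed; I claim the standing assumptions make it impossible. Indeed, the routing step leaves $\min(C_k,C_{k+1})=0$; say $C_k=0$, so $k$ is tight. Any flow path transiting $k$ corresponds to a commodity $[u,v]$ with $u<k<v$, i.e., a strict super-interval of $[k,\ell]$; by Lemma~\ref{lem:nested-up} every such commodity carries zero flow. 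Thus no positive flow path transits $k$, contradicting Assumption~\ref{ass:global-content}.

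The main obstacle is precisely this degenerate case $|S|=2$ in which the spanning commodity is the only one in $H[S]$: the simple "$S$-content" reduction breaks down at the endpoints, and one has to invoke Lemma~\ref{lem:nested-up} together with the assumption that every tight player transits some flow to rule it out.
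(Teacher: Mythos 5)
Your proof is correct and follows essentially the same route as the paper's: use Lemma~\ref{lem:nested-down} to find an $S$-content player whenever $S$ contains a node other than $k,\ell$ (the paper phrases this as the cases $k\ell\neq ij$ and interior nodes of $(i,j)$), and in the remaining two-node case rule out a partially routed spanning commodity via the nested-order property (your explicit appeal to Lemma~\ref{lem:nested-up}) together with Assumption~\ref{ass:global-content}. The only cosmetic difference is that you invoke Lemma~\ref{lem:s-cont} directly where the paper cites the standing Assumption~\ref{ass:s-content}, which is equivalent.
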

\begin{proof}
If such a commodity exists and $k\ell \neq ij$, then by Lemma~\ref{lem:nested-down}, every commodity in $H[S]$ is fully-routed. But then  every player in $S$ is 
$S$-content, contradicting  Assumption~\ref{ass:s-content}. If $k\ell=ij$, then every node in $(i,j)$ would be $S$-content as well.
Since there is no $S$-content node, this only leaves the case $j=i+1$. We must have $i(i+1)$ is not fully-routed  
or else both $i$ and $j$ would be $S$-content. But then since $f$ was produced via a nested order,  $i(i+1)$ was routed and blocked before any demands which transit through either $i$ or $i+1$.
It follows that one of $i$ or $i+1$ hit capacity only by commodities which terminate at it. Hence it is globally content, contradicting  Assumption~\ref{ass:global-content}.
\end{proof}

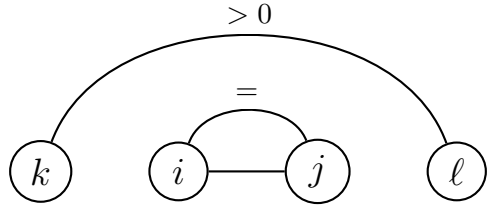
\begin{figure}[ht]
     \centering
      \begin{tikzpicture}[-,>=stealth',auto,node distance=1.0cm,
                        thick,main node/.style={circle,draw,minimum size=2em,font=\sffamily\Large\bfseries}]
    
        \node[main node] (1) {$k$};
        \node[main node] (2) [right = of 1] {$i$};
        \node[main node] (3) [right = of 2] {$j$};
        \node[main node] (4) [right = of 3] {$\ell$};
        
        \Edge(2)(3)
        
        \tikzset{EdgeStyle/.append style = {bend left = 70}}
        \Edge[label={$>0$}](1)(4)
        \Edge[label={$=$}](2)(3)

    \end{tikzpicture}
    \caption{Configuration of Lemma \ref{cor:spanpositive}.}
\end{figure}

\subsection{Anchor Sets for Paths}

We define the {\em anchors} of a node $v$ as $A(v) = \{ k \in V : \exists P \in \mathcal{P}_{k\ell}, ~ f(P)>0, ~ v \in I(P) \}$. In other words, if there is a positive flow path which transits $v$, then the endpoints of the path are anchors of $v$. Given a coalition $S$ we refer to any positive commodity with exactly one end point in $S$ as an {\em external commodity}.

We further refine our notion of anchors for paths. For a given node $v$ we split its anchors into disjoints sets of {\em left} and {\em right anchors}. The left anchors are $L(v)= \{ u \in V : u \in A(v), ~ u<v \}$ and right anchors $R(v)= \{ u \in V : u \in A(v), ~ v < u \}$. We define the left (resp. right) {\em anchor sets} for a player $v$ as $Y=\{v\}$, and $W=L(v) \cup \{v\}$. (resp. $Y=\{v\}$, and $W=R(v)\cup \{v\}$). 

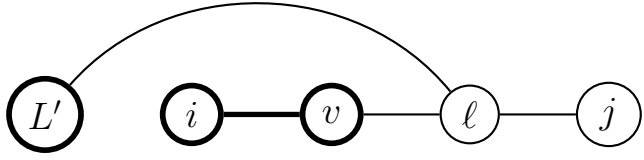
\begin{figure}[ht]
\label{fig:PA}
     \centering
      \begin{tikzpicture}[-,>=stealth',auto,node distance=1.0cm,
                        thick,main node/.style={circle,draw,minimum size=2em,font=\sffamily\Large\bfseries}]
    
        \node[main node, line width=0.75mm] (1) {$L'$};
        \node[main node, line width=0.75mm] (2) [right = of 1] {$i$};
        \node[main node, line width=0.75mm] (3) [right = of 2] {$v$};
        \node[main node] (4) [right = of 3] {$\ell$};
        \node[main node] (5) [right = of 4] {$j$};
        
        \Edge[lw=2pt](2)(3)
        \Edge(3)(4)
        \Edge(4)(5)

        \tikzset{EdgeStyle/.append style = {bend left = 50}}
        \Edge(1)(4)

    \end{tikzpicture}
    \caption{The configuration disallowed by \textbf{PA}}
\end{figure}

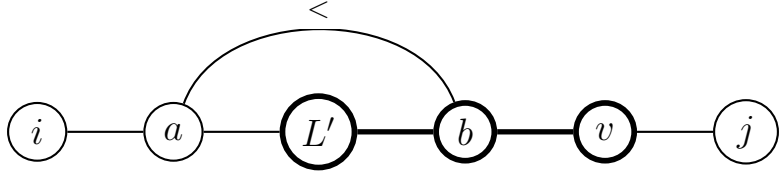
\begin{figure}[ht]
\label{fig:PB}
     \centering
      \begin{tikzpicture}[-,>=stealth',auto,node distance=1.0cm,
                        thick,main node/.style={circle,draw,minimum size=2em,font=\sffamily\Large\bfseries}]
    
        \node[main node] (1) {$i$};
        \node[main node] (2) [right = of 1] {$a$};
        \node[main node, line width=0.75mm] (3) [right = of 2] {$L'$};
        \node[main node, line width=0.75mm] (4) [right = of 3] {$b$};
        \node[main node, line width=0.75mm] (5) [right = of 4] {$v$};
        \node[main node] (6) [right = of 5] {$j$};
        
        \Edge(1)(2)
        \Edge(2)(3)
        \Edge[lw=2pt](3)(4)
        \Edge[lw=2pt](4)(5)
        \Edge(5)(6)

        \tikzset{EdgeStyle/.append style = {bend left = 70}}
        \Edge[label=$<$](2)(4)

    \end{tikzpicture}
    \caption{A non-fully-routed demand disallowed by \textbf{PB}}
\end{figure}
 
\begin{lemma}
\label{lem:pseudo}
In this lemma we assume only that $f$ is a nested flow on the path $G$.
The left (resp. right) anchor sets of  a tight node $v \in S$ is a pre-certificate  as long as the following conditions hold:
\begin{enumerate}
    \item[\textbf{PA)}] $L(v) \subseteq S$ (resp. $R(v) \subseteq S$)
    \item[\textbf{PB})] there is no non-fully-routed commodity $ab$ such that $b \in L(v)$ and $a \in [i,L')$  -- see Figure~\ref{fig:PB} (resp. $a \in R(v)$ and $b \in (R',j]$).

\end{enumerate}

Here $L'$ is the leftmost anchor of $v$ (resp. $R'$ is the rightmost anchor of $v$).
\end{lemma}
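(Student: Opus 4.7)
By the Pre-Certificate Lemma (Lemma~\ref{lem:pseudo1}) it suffices to verify \textbf{P1}--\textbf{P4} for the pair $Y=\{v\}$ and $W=L(v)\cup\{v\}$; the right-anchor case is symmetric. Property \textbf{P4} is immediate because $|Y|=1$, so no path can touch two nodes of $Y$. For \textbf{P2}, any positive flow path that transits $v$ in a path graph must join some $k<v$ with some $\ell>v$; hence its left endpoint $k$ is a left anchor of $v$ and lies in $L(v)\subseteq W$ (using \textbf{PA}).

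The substance of the argument is in \textbf{P1} and \textbf{P3}, and both are driven by Lemma~\ref{lem:nested-down} applied to the leftmost anchor $L'$ of $v$. Because $v$ is tight, Assumption~\ref{ass:global-content} guarantees that $v$ transits some positive flow, so $L'$ exists and there is some $r>v$ with $f_{L'r}>0$; then every commodity whose endpoints lie strictly inside $[L',r]$ is fully-routed.

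For \textbf{P1}, if $ab\in H[S]$ has $a,b\in W$, then $L'\le a,b\le v<r$, so $[a,b]\subsetneq[L',r]$ (the strictness comes from $b\le v<r$), and Lemma~\ref{lem:nested-down} forces $ab$ to be fully-routed. For \textbf{P3}, suppose for contradiction that some non-fully-routed $ab\in H[S]$ has an endpoint in $W$ and admits an $ab$-path in $G\setminus\{v\}$; the latter condition forces $v$ not to lie strictly between $a$ and $b$. If one endpoint equals $v$ then every $ab$-path meets $v$, impossible; and if both endpoints exceed $v$ then neither lies in $W$, contradicting the hypothesis. Hence $a,b<v$, and the endpoint assumed to lie in $W$ must in fact lie in $L(v)$; say $a\in L(v)$. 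Let $c=\min(a,b)$ and $d=\max(a,b)$. If $c\ge L'$, then $L'\le c<d<v<r$ gives $[c,d]\subsetneq[L',r]$ and Lemma~\ref{lem:nested-down} contradicts non-fully-routedness. Otherwise $c<L'$, and since $a\in L(v)$ forces $a\ge L'$, we must have $c=b$ and $d=a$; hence $d=a\in L(v)$ and $c=b\in[i,L')$ (using $b\ge i$ from $b\in S$), which is exactly the configuration forbidden by \textbf{PB}.

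The main subtleties to watch are (i) ensuring the containment into $[L',r]$ is strict whenever Lemma~\ref{lem:nested-down} is invoked -- the slack always comes from the inequality $v<r$ -- and (ii) identifying the correct roles of the two endpoints of $ab$ in the hypothesis of \textbf{PB}, which the $c<d$ relabeling cleanly resolves. No separate argument is needed to handle the possibility $L'=i$: in that case $[i,L')$ is empty, but then the first branch of the case analysis ($c\ge L'$) handles every commodity in $H[S]$, so \textbf{PB} is never invoked.
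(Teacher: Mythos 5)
Your proof is correct and takes essentially the same route as the paper's: verify \textbf{P1}--\textbf{P4} of the Pre-Certificate Lemma, with \textbf{P4} trivial from $|Y|=1$, \textbf{P2} from the definition of left anchors, and \textbf{P1}, \textbf{P3} driven by Lemma~\ref{lem:nested-down} applied to a positive flow path from the leftmost anchor $L'$ through $v$, with \textbf{PB} ruling out the remaining case of an endpoint left of $L'$ (your explicit case analysis and the existence of $L'$ via Assumption~\ref{ass:global-content} just spell out details the paper leaves implicit). One cosmetic point: \textbf{PA} is what guarantees $W=L(v)\cup\{v\}\subseteq S$, a hypothesis of Lemma~\ref{lem:pseudo1}, whereas $L(v)\subseteq W$ holds by definition, so the parenthetical ``(using \textbf{PA})'' in your \textbf{P2} step is attached to the wrong claim.
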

\begin{proof}{\parindent0pt
It is sufficient to show that if the anchor sets $Y=\{v\}, W=L(v) \cup Y$  satisfy  properties \textbf{PA} and \textbf{PB}, then they satisfy  all properties from the Pre-Certificate Lemma \ref{lem:pseudo1}. First, note that by  \textbf{PA} we have $W   \subseteq S$ as required. \\ 

\begin{enumerate}
    \item[\textbf{P1})] Observe that $W$ is a subset of $[L',v]$ where $L'$ is the leftmost anchor. 
    Any commodity with both endpoints in $W$ is thus nested under a positive commodity that terminates at $L'$ and some node to the right of $v$.   Hence by Lemma~\ref{lem:nested-down} it is fully-routed. 
    
    \throttle{this appears to be an issue. We might route over top of it to achieve $\tau$ for some terminal no? }
    
    \item[\textbf{P2})]  For any positive commodity that transits $v$, its left endpoint  is an anchor of $v$, and   $L(v) \subseteq W$.  

    \item[\textbf{P3})] Suppose that $ab$ is a non-fully-routed commodity with an endpoint in $W$ but does not transit or terminate at $v$. By Lemma~\ref{lem:nested-down}, we must have $a < L'$ or else it is fully-routed. But  this possibility is disallowed by \textbf{PB}. 

    \item[\textbf{P4})] This follows trivially as $|Y|=1$. 
\end{enumerate}

}\end{proof}

\subsection{Proof of Theorem~\ref{thm:main}}

Recall by Assumption~\ref{ass:tight}, $S$ contains at least one tight node.

\begin{lemma}
\label{lem:left-most-tight}
If $v$ is a leftmost tight node in $S$, then there does not exist any non-fully-routed commodity $ab$ such that $[a,b] \subseteq [i,v)$ (resp. a similar statement holds for rightmost).
\end{lemma}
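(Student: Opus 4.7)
The plan is to give a short proof by contradiction that leverages the mechanics of {\sc incorporate} together with the argument used inside the proof of Lemma~\ref{lem:nested-up}. Suppose for contradiction that there is a non-fully-routed commodity $ab$ with $[a,b] \subseteq [i,v)$. I would first observe that whenever {\sc incorporate} invokes {\sc Route} on a commodity $k\ell$ and ends up with $f_{k\ell} < d_{k\ell}$, this can only be because the minimum residual capacity along $P_{k\ell}$ at the time of routing was strictly less than $d_{k\ell}$; after routing that minimum amount, some node $x \in [k,\ell]$ has residual capacity zero, i.e., is tight. This is exactly the mechanism already exploited in the proof of Lemma~\ref{lem:nested-up}.

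Applying this observation to $ab$ yields a node $x \in [a,b]$ that is tight immediately after $ab$ is processed. The key second ingredient is monotonicity of tightness: the {\sc Route} subroutine only ever increases $f$ and decreases $C_v$, so once a node reaches zero residual capacity it stays tight throughout the remainder of the execution. Hence $x$ is still tight at termination of {\sc incorporate}.

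But then $x \in [a,b] \subseteq [i,v) \subseteq S$ is a tight node of $S$ lying strictly to the left of $v$, contradicting the choice of $v$ as a leftmost tight node in $S$. The symmetric argument, swapping ``leftmost'' for ``rightmost'' and $[i,v)$ for $(v,j]$, handles the parenthetical case. I do not anticipate any real obstacle here: the statement is essentially a direct consequence of the greedy nature of {\sc incorporate} combined with the monotonicity of tightness, both of which are already in place from Section~\ref{sec:notation}.
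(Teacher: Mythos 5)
Your proof is correct and takes essentially the same route as the paper: the paper also argues by contradiction that a non-fully-routed commodity inside $[i,v)$ would have to have been blocked by a tight node strictly left of $v$, which is impossible since $v$ is the leftmost tight node in $S$. You simply make explicit the two ingredients the paper leaves implicit, namely the blocking mechanism from the proof of Lemma~\ref{lem:nested-up} and the monotone persistence of tightness under {\sc Route}.
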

\begin{proof}
Suppose such a  commodity $ab$ exists. Then $v$ is a left most tight node so there are no tight nodes in $[i,v)$. No player could have blocked $ab$ so it was not maximally routed, a contradiction. 
\end{proof}

\throttle{This seems OK.}

Let $X \subseteq V(G)=[n]$. We say $X$ has been {\em processed}
by the algorithm if each element of $X$ has been incorporated. 
The {\em time} of processing $X$ refers to the first time in the execution after which all elements are processed.

\begin{lemma}
\label{lem:path-timing-cert}
    Let $\hat{f}$ be the current flow at the  time of processing $\{i,j\}$.     Then $[i,j]$ is certifiable under the flow $\hat{f}$.
\end{lemma}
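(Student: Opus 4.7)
The plan is to apply Lemma~\ref{lem:pseudo} to the left or right anchor set of an appropriately chosen tight node in $S = [i,j]$. The standing Assumptions~\ref{ass:global-content}--\ref{ass:tight} together with Lemma~\ref{cor:spanpositive} let us assume that $S$ contains a tight node and that no positive commodity spans $[i,j]$; otherwise $S$ is already certifiable.

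The key step is a timing observation about $\hat{f}$. Let $i^\star \in \{i,j\}$ denote whichever of $i,j$ was incorporated later (exactly one is, since {\sc incorporate} adds a single node per outer iteration). Since the incorporated set is always a connected subpath of $P_n$ and contains $[i,j]$ at time $t$, the node $i^\star$ must be an endpoint of the current set $S_t$. We treat the case $i^\star = i$; the case $i^\star = j$ is entirely symmetric with left and right swapped. Immediately before $i$ is incorporated, the current set is a connected subpath containing $j$ but not $i$, which forces it to equal $[i+1,b]$ for some $b \ge j$. Hence every commodity routed prior to $i$'s incorporation has both endpoints in $[i+1,b]$, and the commodities routed during $i$'s incorporation are of the form $ik$ with $k \in [i+1,b]$. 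Thus every positive flow path of $\hat{f}$ has both endpoints at least $i$.

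With this in hand, let $v^L$ be the leftmost tight node in $[i,j]$. We verify that the left anchor set $Y = \{v^L\}$, $W = L(v^L) \cup \{v^L\}$ satisfies the hypotheses of Lemma~\ref{lem:pseudo}. Property~\textbf{PA} is immediate from the timing observation: each left anchor of $v^L$ is the left endpoint of a positive flow path transiting $v^L$ and therefore lies in $[i,v^L) \subseteq [i,j]$. For property~\textbf{PB}, any violating non-fully-routed commodity $ab$ would satisfy $a < L' \le b < v^L$ with $a \ge i$, so $[a,b] \subseteq [i,v^L)$; but this is precisely what Lemma~\ref{lem:left-most-tight} rules out. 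Applying Lemma~\ref{lem:pseudo} yields a pre-certificate, and setting $z$ to its implied values produces the required Farkas certificate for $[i,j]$ under $\hat{f}$. The main obstacle we anticipate is pinning down the one-sided endpoint structure of $\hat{f}$ cleanly from the connectedness of $S_t$; once that is in place, both \textbf{PA} and \textbf{PB} reduce directly to Lemma~\ref{lem:left-most-tight}.
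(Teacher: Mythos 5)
Your proof is correct and follows essentially the same route as the paper's: take the later-incorporated endpoint of $\{i,j\}$ (WLOG $i$), note that no routed commodity can have an endpoint left of $i$, pick the leftmost tight node, and verify \textbf{PA} via this timing fact and \textbf{PB} via Lemma~\ref{lem:left-most-tight} before invoking Lemma~\ref{lem:pseudo}. Your explicit use of connectivity of the incorporated set to pin down the pre-incorporation subpath $[i+1,b]$ is just a more detailed rendering of the paper's one-line timing argument.
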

\begin{proof}
    Without loss of generality, let $i$ be the last of $i,j$ to be incorporated. By Assumption \ref{ass:tight} there must be at least one tight node in $[i,j]$ or else it is certifiable.  So  let $v$ be a leftmost tight node.\\
    
    \noindent
    We now show that the left anchor sets of $v$ form a pre-certificate using Lemma \ref{lem:pseudo}.
    
    \begin{enumerate}
        \item[\textbf{PA})] Since player $i$ has just been incorporated,  no commodity with an endpoint farther left than $i$ has been routed. Hence   $L(v) \subset [i,j]$.
        \item[\textbf{PB})] As $v$ is a leftmost tight node we can apply Lemma \ref{lem:left-most-tight}. 
    \end{enumerate}

\noindent    
Hence $[i,j]$ is indeed certifiable by Lemma~\ref{lem:pseudo}.
\end{proof}

\throttle{DOnt see that the above works yet. The Problem is PA.}

\begin{lemma}
\label{lem:path-bow}
If $f$ is a flow returned by the algorithm {\sc incorporate}, then $\util(f)$ is in the core. 
\end{lemma}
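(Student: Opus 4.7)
The plan is to combine the timing certificate of Lemma~\ref{lem:path-timing-cert} with the monotonicity statement of Lemma~\ref{lem:up-closed}, reducing any potential breakaway set to the connected case $[i,j]$ and then discharging the various trivial subcases with the machinery of Section~\ref{sec:implied}.

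First I would reduce to connected coalitions. Suppose for contradiction that $\util(f)$ is not in the core, so some proper $S \subsetneq V$ is a breakaway set, witnessed by a deviation $f' \in \mathcal{F}^S$ with $\util(f') \succ_S \util(f)$. Because $G$ is the path $P_n$, the induced subgraph $G[S]$ decomposes as a disjoint union of intervals $[i_1,j_1] \cup [i_2,j_2] \cup \cdots$, and every positive flow path in $f'$ lies entirely inside one of these intervals. Restricting $f'$ to the paths within a single component $[i_k,j_k]$ yields a feasible flow in $\mathcal{F}^{[i_k,j_k]}$ whose payoff strictly dominates $\util(f)$ on $[i_k,j_k]$. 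Hence some interval $[i,j] \subseteq S$ is itself a breakaway set, and we may assume $S=[i,j]$.

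Next I would peel off the easy cases. If $[i,j]$ contains a globally content or $S$-content player, or has no tight node at all, then the analysis in Section~\ref{sec:implied} produces a certificate directly. Similarly, if there is a positive flow path spanning $[i,j]$, Lemma~\ref{cor:spanpositive} gives a certificate. In each of these cases $[i,j]$ is certifiable under $f$, contradicting that it is a breakaway. So we may assume Assumptions~\ref{ass:global-content}--\ref{ass:tight} hold and no positive flow path spans $[i,j]$.

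Now I invoke the timing argument. Let $\hat{f}$ be the partial flow produced by {\sc incorporate} at the moment after both endpoints $i$ and $j$ have been incorporated; since flows only grow during the execution, $f \geq \hat{f}$ coordinate-wise. By Lemma~\ref{lem:path-timing-cert}, the coalition $[i,j]$ is certifiable under $\hat{f}$. Applying Lemma~\ref{lem:up-closed} lifts this certificate to the final flow $f$, so $[i,j]$ is certifiable under $f$ as well and thus admits no deviation. This contradicts that $[i,j]$ is a breakaway set, completing the proof. The only real subtlety is the reduction to a connected interval — once that is in place the proof is a direct splice of the two preceding lemmas, since all the hard combinatorial work (constructing the pre-certificate from a leftmost tight node and the left anchor set) has already been done.
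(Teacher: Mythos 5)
Your proof is correct and follows essentially the same route as the paper: the heart of both arguments is the splice of Lemma~\ref{lem:path-timing-cert} (certifiability of $[i,j]$ at the time both endpoints are incorporated) with Lemma~\ref{lem:up-closed} (certificates survive flow increases), giving certifiability of every contiguous coalition under the final flow. The only differences are cosmetic: you spell out the reduction to contiguous coalitions, which the paper asserts without proof at the start of Section~\ref{sec:pathcert}, and your peeling off of the content/tight/spanning cases is redundant since Lemma~\ref{lem:path-timing-cert} already covers every interval unconditionally.
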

\begin{proof}
    The last pair to be processed by the algorithm is $\{1,n\}$  and all pairs  have been considered prior. By Lemma \ref{lem:path-timing-cert} after a pair $\{k,\ell\}$ is processed, the corresponding subset $[k,\ell]$ was certifiable for the current flow at that time. By Lemma \ref{lem:up-closed} after adding more flow each set remains certifiable. Thus every coalition of contiguous players is certifiable for $f$, and $\util(f)$ must be in the core. 
\end{proof}

This completes the proof of Theorem~\ref{thm:main}.
We now extend these ideas to handle spiders.

\begin{repcorollary}{cor:spiders}
Let $\mathcal{T}_c$ be the class of trees (so-called spiders) with at most $1$ node of degree greater than two.
There is a polytime algorithm for producing core elements for any  multi-flow game associated with a tree in $\mathcal{T}_c$. 
\end{repcorollary}

A {\em spider} is a connected graph where every node except for one has degree at most two, we call this distinguished node the root of the spider, and the paths connected to it the legs of the spider. We will consider orienting the spider with the root on the left, and each leg leading to the right so as to easily allow us to re-use the terminology from paths. We now adapt our methods on the path to find core elements when the supply graph is a spider. We consider a modified instance where all demands between legs of the spider are removed. That is, any commodity of form $uv$ where neither $u$ or $v$ is the root and $u$ and $v$ are on different legs. In particular each such commodity transits the root. We then run {\sc incorporate} starting at the root. 

\begin{lemma}
If {\sc Incorporate} starts from the root on the modified spider, then it returns a core element for the original game. 
\end{lemma}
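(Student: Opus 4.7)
The plan is to adapt the proof of Theorem~\ref{thm:main} by splitting on whether the root $r$ belongs to the candidate coalition $S$ for which we must exhibit a Farkas certificate.

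First I would handle the case $r \notin S$. Since the spider minus its root is a disjoint union of paths, $G[S]$ breaks into subpaths, one per leg that $S$ meets. Any deviation in $\mathcal{F}^S$ can route flow only within a single component of $G[S]$, so any cross-leg commodity in $H[S]$ contributes nothing to a potential deviation, and any isolated singleton of $G[S]$ cannot be strictly improved. Therefore it suffices to certify each non-trivial component $S_j = [\alpha,\beta] \subseteq L_j$. The restricted flow $f$ on $L_j\cup\{r\}$ is nested, so I would mimic the path-timing certificate by taking $v$ to be a leftmost tight node of $S_j$ and building $W$ from the anchors of $v$ lying in $S_j$. The subtle point is that $v$ may have left anchors outside $S_j$ (closer to $r$ on $L_j$, or $r$ itself). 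If some positive commodity transits $v$ with both endpoints outside $S_j$, then by Lemma~\ref{lem:nested-down} every commodity $\alpha w$ with $w \in (\alpha,\beta]$ is fully-routed, making $\alpha$ an $S_j$-content player, so Lemma~\ref{lem:s-cont} certifies $S_j$. In the remaining cases, enlarging $W$ to $(L(v)\cap S_j)\cup\{v\}\cup(R(v)\cap S_j)$ yields a pre-certificate satisfying P1--P4 via the nested structure on the leg.

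For $r \in S$, the key structural observation is that because cross-leg demands were deleted, $f$ routes no flow transiting $r$; all of $r$'s utility comes from commodities terminating there. If $r$ is tight in $f$, the one-node choice $Y=W=\{r\}$ is itself a certificate: every $k\ell$-path in $G[S]$ that could yield a positive implied $z_{k\ell}$ passes through $r\in Y$, so the implied $z\equiv 0$, and property P3 is automatic because $G\setminus\{r\}$ contains no cross-leg path. The total inequality collapses to $c_r \le \util_r = c_r$. If $r$ is not tight but every $rv$ with $v \in S$ is fully-routed, then $\util_r \ge \sum_{v \in S,\, rv \in H} d_{rv}$ and $r$ is $S$-content by Lemma~\ref{lem:s-cont}. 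Otherwise some $rv$ with $v\in S$ was blocked during the algorithm by a tight leg node $t$; I would place exactly one such blocking node per affected leg in $Y$ and include $r$ in $W$. Since $f$ never routes a cross-leg commodity, every positive flow path of $f$ lies within a single leg (possibly incident to $r$), so assigning at most one tight node per leg to $Y$ preserves P4, and P3 for each cross-leg commodity is then ensured because its $G$-path must traverse the chosen blocking node on at least one of its legs.

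The main obstacle I expect is precisely this last subcase: simultaneously certifying via tight nodes spread over several legs while keeping the charging argument clean. The delicate point is verifying that the capacity $c_t$ paid at each leg's blocking node can be absorbed, as in the path case, by the utility of players on the same leg, while $\util_r \in W$ pays off the residual implied $z$-values from $r$-commodities that were only partially routed. The nested ordering on each leg together with the absence of any cross-leg flow in $f$ are the two structural levers that make this combination work.
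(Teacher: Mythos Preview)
Your case split on whether $r\in S$ matches the paper, but the paper's argument in the $r\in S$ branch is far simpler and avoids the very obstacle you flag as unresolved. The observation you miss is that a \emph{single-leg} certificate already certifies all of $S$. The paper picks any leg $\ell_i$ that contains a node tight at the time $\{r,t_i\}$ has been processed (where $t_i$ is the rightmost node of $\ell_i\cap S$), takes the rightmost such tight node $v$, and uses its right-anchor pre-certificate. Since $W=R(v)\cup\{v\}$ lies on $\ell_i$ entirely to the right of $v$, every $G[S]$-path from a node of $W$ to $r$ or to any other leg must pass through $v\in Y$; hence P3 holds automatically for all cross-leg commodities and for every non-fully-routed commodity incident to $r$, so this one-leg certificate certifies the whole coalition $S$. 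If no leg has a tight node at its processing time, then every commodity $rv$ with $v\in S$ was routed in full and $r$ is $S$-content.

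Your multi-leg construction, by contrast, places tight blocking nodes from several legs into $Y$, and since Lemma~\ref{lem:pseudo1} requires $Y\subseteq W$, the commodities $rt_i$ (and any cross-leg commodity between two blocking nodes) have both endpoints in $W$ yet need not be fully routed; this breaks P1, and the charging scheme of the pre-certificate lemma then over-charges against $\util_r$. The difficulty you anticipate is therefore genuine, and the paper sidesteps it rather than resolving it. A smaller point on the $r\notin S$ branch: since {\sc incorporate} starts at $r$ and moves outward, the last node of $S=[\alpha,\beta]$ to be incorporated is $\beta$, so the direct analogue of Lemma~\ref{lem:path-timing-cert} uses the \emph{rightmost} tight node and its right anchors, which automatically lie inside $[\alpha,\beta]$; your choice of the leftmost tight node is exactly what forces the extra case analysis about anchors escaping $S_j$.
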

\begin{proof}
Consider a (contiguous) coalition $S$ that does not contain the root $r$, then the flow returned is the same as if we simply ran {\sc  incorporate} on the leg that contains $S$ starting at the root, and by our results on the path $S$ is certifiable. 

Consider a coalition $S$ that does contain the root. We can then consider each leg $\ell_i$ of the spider individually. Let $t_i$ be the rightmost node of $\ell_i$ contained in $S$, then similarly to Lemma \ref{lem:path-timing-cert} we look at the flow $\hat{f}$  at the time of processing $\{r,t_i\}$. If there are any nodes in $\ell_i$ that are tight under $\hat{f}$ then there is a rightmost tight node and a right anchor certificate. If no leg has a right anchor certificate, then there are no tight nodes to block the flow from $r$ and $r$ must be $S$-content. Hence by Lemma \ref{lem:s-cont} $S$ is certifiable.
\end{proof}

After we find a flow in the core for the modified instance we can add back the removed commodities and route them arbitrarily. 

Note we can view any path as a spider with two (potentially empty) legs with any node playing the role of the root. This gives us access to more valid orderings and potentially more core elements.  

\section{The Empirical Core}
\label{sec:empirical}

We now have an algorithm that can potentially create a different core element for each  order in which players are incorporated. In this we investigate the diversity of core elements generated  by running {\sc incorporate} multiple times with a random incorporation ordering. We refer to the set of core elements obtained as the {\em empirical core} or $ECore$ of a given game. 

Since games (both cooperative and non-cooperative) may have many stable solutions,   it is interesting and important to compare relative merits associated with them.  It is natural to view this as a multi-objective decision process where the two most common {\em social choice} metrics are {\em social welfare} (SW) and {\em fairness}. The social welfare of a payoff is the sum individual payoffs  $SW(\util)=\sum_{v \in V} \util_v$, and the fairness  is the minimum payoff of any player $Fair(\util) = \min_{v \in V} \util_v$. Maximizing social welfare is desirable because it maximizes efficiency, whereas maximizing fairness  appeals to notions of distributive justice (sometimes know as equity). Since all core elements are stable payoffs,  our main question can be summarized as "{\bf What is the impact of stability on efficiency and fairness?}"

We next describe the different classes of games whose empirical cores we investigate.

\subsection{Game Models}

Recall that an instance of a multicommodity flow game on the path is defined by the number of players, their capacities, and the demand graph with its edge weights. Each family of games is  defined by a 
 {\em game model} denoted   $M(n,C,D)$ where $n,C,D$ are three real-valued parameters; $n$ denotes the number of players
 and $C,D$ are values which govern the choice of capacity and demand vectors. Each model comes with a method or experiment for generating
 capacities and demands based on the parameters $C,D,n$.  Intuitively,
 $C,D$ are used to define how {\em competitive} the games are.   When capacities are high and demands relatively low, there is little  competition and players can route more or less freely. Whereas when capacities are low and demands are high, we say  there is high competition on the network. 

\subsubsection{Constant Model}

The constant model is extremely simple. This model deterministically returns a game where every player $v$ has capacity $c_v=C$, and every pair of players $uv$ has demand $d_{uv}=D$. 

\subsubsection{Gaussian Marginal Model}

In this model we generate marginal demands $d_v=\sum_{u \neq v} d_{uv}$ for each player $v$ according to a distribution described below. Each player's capacity is then set to be $c_v=Cd_v$. To generate the demands we draw $D$ pairs $u,v$ using a truncated normal distribution (for a formal definition see \ref{app:trunc-dist})

For each pair drawn from the distribution we round them both to the nearest integers $u,v$. For each such pair $u,v$ we increment $d_{uv}$. 

\subsubsection{Random Graph Model}

In this model each capacity is drawn uniformly at random from the interval $[1,C]$. The demand graph is generated from an Erd\"os-Renyi random graph model where every edge is included independently with probability a half. Each edge weight $d_{uv}$ is then drawn uniformly from the interval $[1,D]$.

\subsection{Results}

\subsubsection{Constant Model Results}

In this section we fix the number of players $n=50$ and the constant demand $D=1$. We then generate multiple games as we let the amount of capacity rise. Eventually as capacity is abundant enough all demands can be routed and the core consists of a single payoff. Clearly, when $C=0$ and there is no capacity the only feasible flow is the zero flow, and consequently the core must also be this single point.

We first examine how the number of distinct core elements (size of the Ecore) changes as the  capacity increases (competitiveness decreases). These results are show in Figure \ref{fig:const_count}. There are $2 \sum_{i=1}^{\frac{n-1}{2}}\binom{n-(i+2)}{i-1}$ different incorporation orderings, or for $n=50$ there are 9,615,053,952 different orderings. We however only generate a tiny fraction of this amount. 

\begin{figure}[H]
\centering
    \includegraphics[scale=0.5]{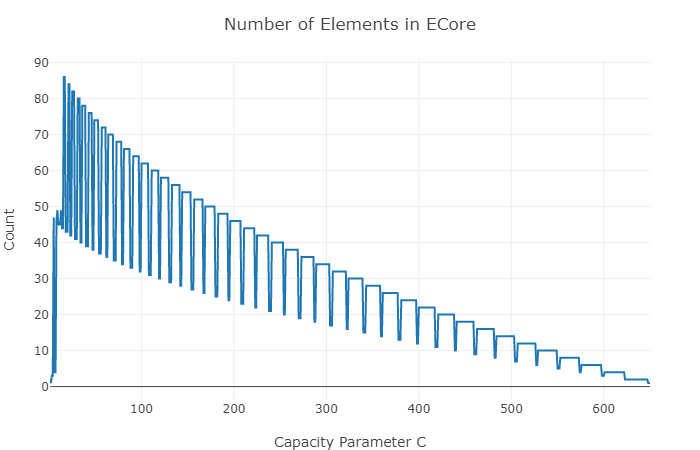}
    \caption{Number of (distinct) Elements in ECore for Constant Model, 2000 samples per game}
    \label{fig:const_count}
\end{figure}

Next, we examine the same games from the perspective of social welfare.  Each game is determined by the parameter $C$ and for each game we consider $2000$ random incorporation orderings. 
In Figure \ref{fig:const_SW} we plot the minimum, average, and maximum social welfare over the empirical core for each $C$. We also compare these to the maximum possible SW (denoted Optimal SW) which is obtained by solving a  multicommodity flow problem. Note that  the curves for the maximum SW over the empirical core is occluded by the curve  Optimal SW.

The take-away message from Figure~\ref{fig:const_SW} appears to be that for the constant model,  core elements essentially maximize social welfare. Even our minimum sampled  core elements  achieves ~93\% of the maximum social welfare. Conversely, it is natural to ask if solving the Optimal SW LP  always produces  a core element. This is false as the example of 
Figure \ref{fig:nest-counter} is a basic optimal solution to the SW LP, but it is not in the core.   

We next examine how effective the empirical core is in terms of  fairness.

\begin{figure}[H]
\centering
    \includegraphics[scale=0.5]{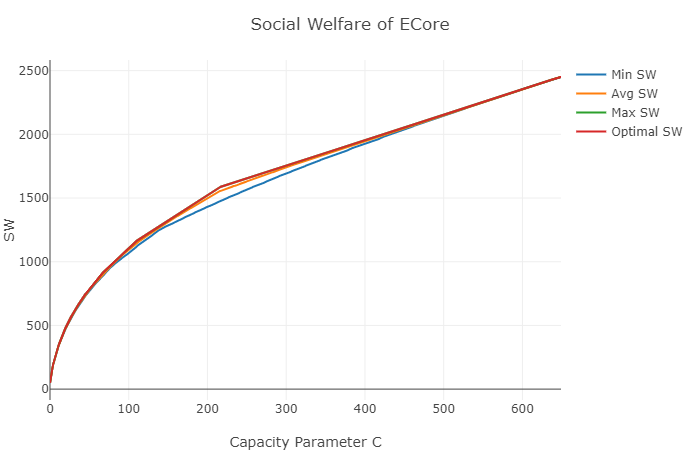}
    \caption{Social Welfare of ECore for Constant Model, 2000 samples per game}
    \label{fig:const_SW}
\end{figure}

Figure \ref{fig:const_fair} shows the minimum, average, and maximum fairness over the same family of empirical cores. Since optimal fairness (not necessarily for core payouts) can again be computed by an LP, we include this in our comparison. The message here is quite different. First, even the maximum ECore fairness is generally quite far (factor $2$) from the optimal fairness. Second, the ratio between the maximum ECore fairness to the average and minimum ECore fairness is substantial.

\begin{figure}[H]
\centering
    \includegraphics[scale=0.5]{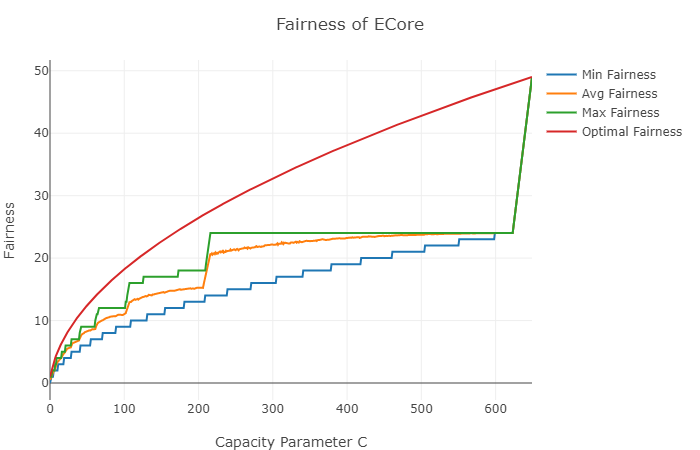}
    \caption{Fairness of ECore for Constant Model, 2000 samples per game}
    \label{fig:const_fair}
\end{figure}

\subsubsection{Gaussian Marginals Model Results}

We repeat the above experiments for the Gaussian Marginal Model in Figures \ref{fig:norm_count}, \ref{fig:norm_SW}, and \ref{fig:norm_fair}. Again in Figure \ref{fig:norm_SW} the maximum core social welfare is the same as the optimal social welfare. 

\begin{figure}[H]
\centering
    \includegraphics[scale=0.5]{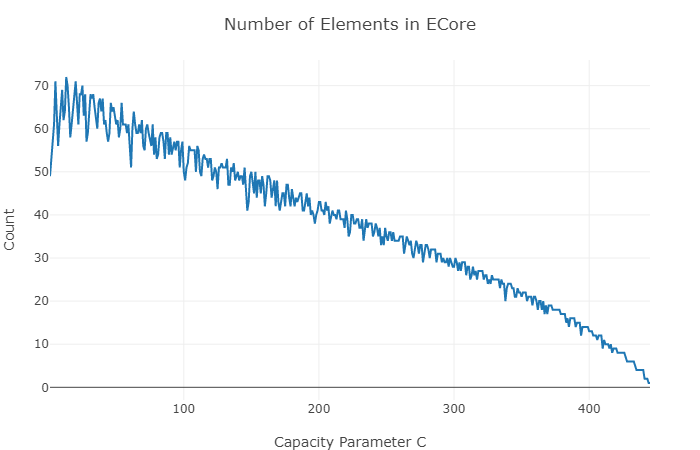}
    \caption{Number of Elements in ECore for Gaussian Marginals Model, 2000 samples per game}
    \label{fig:norm_count}
\end{figure}

\begin{figure}[H]
\centering
    \includegraphics[scale=0.5]{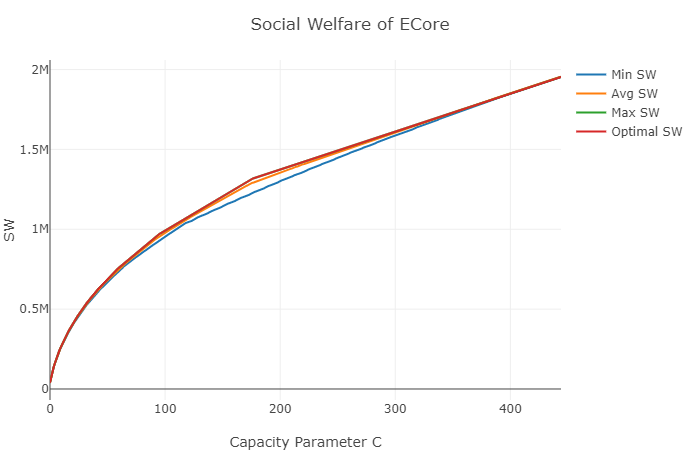}
    \caption{Social Welfare of ECore for Gaussian Marginals Model, 2000 samples per game}
    \label{fig:norm_SW}
\end{figure}

\begin{figure}[H]
\centering
    \includegraphics[scale=0.5]{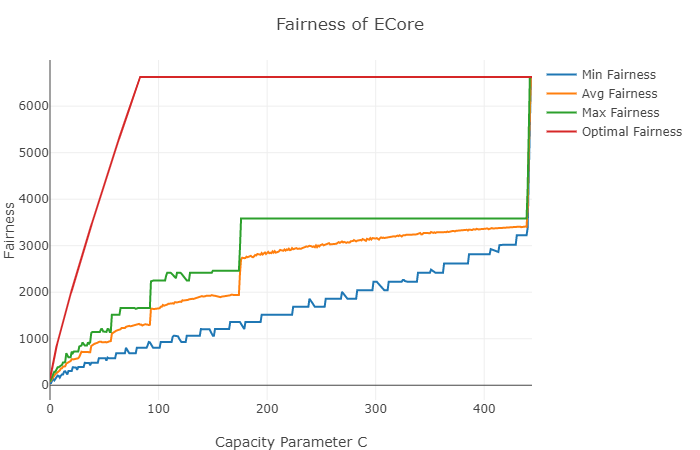}
    \caption{Fairness of ECore for Gaussian Marginals Model, 2000 samples per game}
    \label{fig:norm_fair}
\end{figure}

\subsubsection{Effect of Incorporation Time on Payoff}

In this section we investigate whether there is a benefit for individual players based on the time at which they are incorporated. It is natural to suppose that members would either wish to join the coalition either strictly earlier or strictly later. As we will see this is not the case. Given that a player's payoff depends not only on the time they were added, but also the order in which the other players are added, we investigate all players' average payoffs over the different incorporation orderings.

In Figure \ref{fig:const_time} we see this plotted for a constant model game ($n=50$, $C=250$). The most notable feature is that nearly all of the mass is concentrated to the two main diagonals. In other words, it seems on average that if a player is located at position $i$, they do best if they are added near time $i$ or time $n-i$, with a bias towards the later time. This characteristic "X" shape is present in all the models we tested including the random graph model. While we can observe "spikes" along the main diagonals, these are an artifact of the capacity parameter and not intrinsic to the players' position/their time of incorporation.

\begin{figure}[H]
\centering
    \includegraphics[scale=0.5]{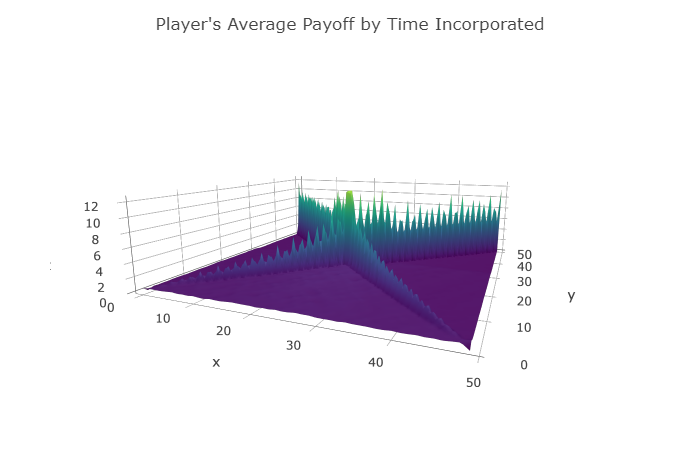}
    \caption{Player's Average Payoff by Time Incorporated, 5000 samples, x=players position, y=time incorporated, z=average payoff}
    \label{fig:const_time}
\end{figure}

\section{Fairness versus Social Welfare}
\label{sec:fairness}

In this section we examine the fairness-social welfare tradeoff in more detail. We start by showing that for single-sink instances, we can balance these performance objectives perfectly.

\begin{reptheorem}{thm:SS}
Let $\mathcal{F}$ be the family of multiflow instances $(G,H)$ where
$G$ is any capacitated supply graph  and $H$ is a single-sink commodity graph. Then there is a polytime algorithm which given an instance in $\mathcal{F}$, produces a core vector which simultaneously achieves the maximum social welfare
and maximum fairness.
\end{reptheorem}
\begin{proof}
We call $x \in \mathbb{R}_{\geq 0}^{[k]}$ a {\em routable vector} if there is a feasible flow which routes $x_i$ units from each terminal $s_i$.
As observed in Section~\ref{sec:emp} a flow induces a core element if   it maximizes $\sum x_i$.  
We next  observe that
$P=\{x \in \mathbb{R}^{[k]}_{\geq 0}: \mbox{$x$ is a routable vector}\}$  is  a polymatroid (cf. \cite{schrijver2003combinatorial}).
As such the greedy algorithm always produces a maximum flow. In other words, we may  greedily process the terminals $s_i$ in any order.  At each step, we route  additional flow 
from $s_i$ without reducing flow from  previously processed terminals (using a standard Ford-Fulkerson Flow algorithm, this may require  re-routing paths). We may perform this process so that terminals are visited more than once, if in earlier iterations we do not ``max out'' the flow from a terminal.

Obviously this algorithm could lead to a maximum flow where many terminals route $0$ ($x_i=0$). 
Hence (as with our experiments on the line) many of the maximum flows $x \in P$  will be core elements which perform poorly in terms of fairness.  
This can be fixed  by adding a  parameter $\tau$ which is a fairness target;  we can later do binary search on $\tau$.
We run a first {\em throttled phase} of the greedy algorithm where we do not increase any $x_i$ above $\tau$. If this results in a flow vector with some $x_i < \tau$, then
we selected $\tau$ too large for fairness.  Otherwise, we perform  a second greedy pass of the terminals. This increases some of the flows
and  is guaranteed  to produce a maximum flow (by polymatroidality) and hence a core vector with fairness $\geq \tau$.
\end{proof}

The proof suggests  a natural way to induce fairness out of any   core-producing algorithm.  Run a throttle phase to guarantee fairness, and then run a second phase which produces a core vector with large social welfare.  Unfortunately, the $P_4$ example (Section~\ref{sec:empirical}) shows that this fails even for the multiflow game.
Instead we relax the target of producing a core vector.
We only require a vector in the approximate core, or equivalently,  a core vector in the game where capacities are scaled down uniformly.

In the following we let $LP_{sw}$ and $LP_{fair}$ denote the LPs which maximize social welfare/fairness for the set of feasible payoff vectors in a given game.

.
\begin{reptheorem}{thm:bicriteria}
Consider a core-producing algorithm for a family of downwards closed games. Consider an instance for which $LP{fair} \geq \tau$.
For any $\lambda \in (0,1)$ we can compute a $\frac{1}{1-\lambda}$-approximate core vector with fairness at least $\lambda \tau$.  
\end{reptheorem}
\begin{proof}
We run the given algorithm for the instance scaled down by 
$(1-\lambda)$. It follows from Lemma~\ref{lem:scale} that the output $\util^{(1)}$ is in the $\frac{1}{1-\lambda}$-approximate core. 
By assumption, there exists $\util^{(2)} \in \Util(N)$ such that $Fair(\util^{(2)}) \geq {\bf \tau}$. Thus (by downwards-closed) $\util=\util^{(1)}+\lambda \util^{(2)} \in \Util(N)$.  We also have that $\util \geq \util^{(2)}$ and hence the same reasoning used in Lemma~\ref{lem:up-closed} implies that $\util$ is
again in the approximate core.
\end{proof}

\section{Conclusion}
\label{sec:conc}

We have provided an efficient algorithm for generating multiple core elements when the transit network is a path or spider. 
 We have shown while there appears to be little trade-off between efficiency and stability, there is a trade-off between stability and fairness.  Our theoretical results are based on  certifying that a coalition has no deviation in general graphs. We hope this may be useful in the challenging problem of computing core vectors for general instances. Another interesting direction, motivated by inter-domain routing behaviours, is to consider games with penalties/taxations for transiting traffic \cite{Shepherd05}.

\bibliographystyle{alpha}
\bibliography{sample}

\appendix

\section{Truncated Normal Distribution}\label{app:trunc-dist}

Let $N(\mu,\sigma)$ be a normal distribution. Let $\phi(x)=\frac{1}{\sqrt{2\pi}}\exp \big(-\frac{1}{2}x^2 \big)$, and $\Phi(x)=\frac{1}{2} \big(1+\text{erf}(x/\sqrt{2} \big)$ be the probability density function and cumulative distribution function for the standard normal distribution respectively. Then a truncated normal distribution with a lower bound of $L$ and an upper bound of $U$ has the following probability density function:

\[ f(x) = \frac{\phi \big( \frac{x-\mu}{\sigma} \big)}{\sigma\Big( \Phi \big(\frac{U-\mu}{\sigma} \big)-\Phi \big(\frac{L-\mu}{\sigma} \big) \Big)}  \]

For our model we choose $\mu=\frac{n}{2}$, $\sigma=2\sqrt{n}$, and $L=1$, $U=n$.

\end{document}